\documentclass[twoside,12pt,reqno]{amsart}
\usepackage{amsmath,amsthm, amsfonts,amssymb}
\usepackage{color,soul}
\usepackage{enumerate}
\usepackage[dvipsnames]{xcolor}
\usepackage[all]{xy}
\usepackage{graphicx}
\usepackage{indentfirst}
\usepackage{bm}
\usepackage{mathrsfs}
\usepackage{latexsym}
\usepackage{hyperref}
\hypersetup{
    colorlinks = true,
    linkcolor = blue,
    urlcolor = blue,
    citecolor = blue,
    anchorcolor = black,
    pdftoolbar = true
    linkbordercolor = {white},
%    <your other options...>,
}

\usepackage{microtype}	
\usepackage{bookmark}

\usepackage{tikz}[2010/10/13]

\newcommand{\lrp}[1]{\left(#1\right)}

\newcommand{\norm}[1]{\Vert #1\Vert}
\newcommand{\FS}{\mathfrak{F}_{\text{{s}}}}

\newcommand{\C}{\mathscr{C}}

\newcommand{\Hil}{\mathcal{H}}
\newcommand{\Hild}{\widehat{\Hil}}
\renewcommand{\geq}{\geqslant}
\renewcommand{\leq}{\leqslant}

\newcommand{\be}{\begin{equation}}
\newcommand{\ee}{\end{equation}}
\newcommand{\beq}{\begin{eqnarray}}
\newcommand{\eeq}{\end{eqnarray}}
\newcommand{\beqs}{\begin{eqnarray*}}
\newcommand{\eeqs}{\end{eqnarray*}}
\newcommand{\lra}[1]{\left\langle #1 \right\rangle}

\newcommand{\RR}[1]{{\bf R$_{\bf #1}$}}
\newcommand{\Sim}[1]{{\bf S$_{\bf #1}$}}

\tikzstyle WL=[line width=3pt,opacity=1.0]
\tikzstyle 5WL=[line width=5pt,opacity=1.0]
\tikzstyle 1WL=[line width=1pt,opacity=1.0]

\newcommand{\drawWL}[3]
{
    \draw[white,WL]  (#2) -- (#3);
    \draw[#1] (#2) -- (#3);
}

\newcommand{\fqudit}[5]{
\node at (#1-#3,#2+#4) {\size{$#5$}};
\draw (#1,#2) --(#1,#2+#4)  arc (0:-180:#3) -- (#1-#3-#3,#2);
}

%%%%%%%%%%%%%%%%%%%%%%%%%%%
% braids (x1, y1, x2, y2)
%%%%%%%%%%%%%%%%%%%%%%%%%%%

%%%%%%%%%%%%%%%%%%%%%%%%%%%
% braids (x, y, x size, y size)
%%%%%%%%%%%%%%%%%%%%%%%%%%%

\newcommand{\braida}[4]
{
\draw (#1+2*#3,0+#2)--(#1+2*#3, 8*#4+#2);
\draw (#1+8*#3,0+#2)--(#1+8*#3, 8*#4+#2);
\draw (#1+0*#3,8*#4+#2)--(#1+1.5*#3,6.5*#4+#2);
\draw (#1+2.5*#3,5.5*#4+#2)-- (#1+3.5*#3,4.5*#4+#2);
\draw (#1+4.5*#3,3.5*#4+#2)-- (#1+6*#3,2*#4+#2) --(#1+6*#3,0*#4+#2);
\draw (#1+0,0*#4+#2)--(#1+1.5*#3,1.5*#4+#2);
\draw (#1+2.5*#3,2.5*#4+#2)--(#1+6*#3,6*#4+#2) --(#1+6*#3,8*#4+#2);
}

\newcommand{\braidb}[4]
{
\draw (#1+2*#3,-0+#2)--(#1+2*#3,- 8*#4+#2);
\draw (#1+8*#3,-0+#2)--(#1+8*#3,- 8*#4+#2);
\draw (#1+0*#3,-8*#4+#2)--(#1+1.5*#3,-6.5*#4+#2);
\draw (#1+2.5*#3,-5.5*#4+#2)-- (#1+3.5*#3,-4.5*#4+#2);
\draw (#1+4.5*#3,-3.5*#4+#2)-- (#1+#1,-2*#4+#2) --(#1+#1,-0*#4+#2);
\draw (#1+0,-0*#4+#2)--(#1+1.5*#3,-1.5*#4+#2);
\draw (#1+2.5*#3,-2.5*#4+#2)--(#1+#1,-6*#4+#2) --(#1+#1,-8*#4+#2);
}

%%%%%%%%%%%%%%%%%%%%%%%%%%%
%Graph1 -- uncolored teleportation
%%%%%%%%%%%%%%%%%%%%%%%%%%%

\newcommand{\fgraphd}[2]
{\tikz{
\braida{0}{0}{1\mn}{1\nn}
\braidb{6\mn}{16\nn}{1\mn}{1\nn}
%%%%
\fqudit{12\mn}{16\nn}{-1\mn}{0\nn}{}
\fqudit{6\mn}{16\nn}{-1\mn}{1\nn}{}
\fqudit{0\mn}{8\nn}{-1\mn}{10\nn}{i_A}
\fmeasure{6\mn}{0\nn}{-1\mn}{1\nn}{-m_1}
%%%%
\node at (13\mn,-4\nn) {\size{$m_1$}};
\draw (12\mn,8\nn) -- (12\mn,-6\nn);
\draw (14\mn,8\nn) -- (14\mn,-8\nn);
\draw (13\mn,-6\nn) --(11\mn,-6\nn) -- (11\mn,-8\nn) --(13\mn,-8\nn);
\draw (15\mn,-6\nn) --(21\mn,-6\nn) -- (21\mn,-8\nn) --(15\mn,-8\nn);
\node at (12\mn, -7\nn) {$T$};
\draw (16\mn,-6\nn) -- (16\mn,-4\nn);
\node at (18\mn,-5\nn) {$\cdots$};
\draw (20\mn,-6\nn) -- (20\mn,-4\nn);
\draw (16\mn,-8\nn) -- (16\mn,-10\nn);
\node at (18\mn,-9\nn) {$\cdots$};
\draw (20\mn,-8\nn) -- (20\mn,-10\nn);
%%%%
\fmeasure{12\mn}{-8\nn}{-1\mn}{2\nn}{-m_2}
\node at (1\mn,-13\nn) {\size{$m_2$}};
\draw (0,0) -- (0,-14\nn);
\draw (2\mn,0) -- (2\mn,-14\nn);
}}

%%%%%%%%%%%%%%%%%%%%%%%%%%%
%Graph1 -- uncolored teleportation
%%%%%%%%%%%%%%%%%%%%%%%%%%%

\newcommand{\mn}{/-3}
\newcommand{\nn}{/3}

\newcommand{\fmeasure}[5]{
\node at (#1-#3,#2-#4) {\size{$#5$}};
\draw (#1,#2) --(#1,#2-#4)  arc (0:180:#3) -- (#1-#3-#3,#2);
}

\theoremstyle{plain}
\newtheorem{theorem}{Theorem~}[section]

\newtheorem{definition}[theorem]{Definition~}

\newtheorem*{remark}{Remark}

\newcommand{\size}[1]{\fontsize{4 pt}{\baselineskip}\selectfont{#1}}
%!TEX encoding = UTF-8 Unicode\newcommand{\size-b}[1]{\fontsize{6 pt}{\baselineskip}\selectfont{#1}}

\title{A Mathematical Picture Language Project}
\author{Arthur Jaffe$^{*}$}
\email{Arthur\_Jaffe@harvard.edu}
\author{Zhengwei Liu$^{+*}$}
\email{liuzhengwei@mail.tsinghua.edu.cn}
\address[*]{Harvard University, Cambridge, MA 02138, USA}
\address[+]{Tsinghua University, Beijing, China}

\thispagestyle{empty}

\thanks
{\noindent Prepared in association with the presentation on December 29, 2018 of the  ICCM Award for the paper written by the authors:
{\it Mathematical Picture Language Project''}~\cite{JL-Pic}.
 This work was supported under Grant TRT 0159 from the Templeton Religion Trust. {AJ was supported in part by ARO Grants W911NF-19-1-0302 and  W911NF-20-1-0082.} {ZL was supported in part by Grant 100301004 from Tsinghua University.} }

\begin{document}
\maketitle

\begin{abstract}
The mathematical picture language project that we began in 2016 has already yielded interesting results.  We also point out areas of mathematics and physics where we hope that it will prove useful in the future.   
\end{abstract}

\section{Pictures}
Pictures in mathematics flourished since the time of Euclid, who lived in Greece during the $4^{\rm th}$-century BC.  About the same time in China,  Chuang-Tzu promoted the philosophy of replacing words by images.  In modern mathematics pictures are pervasive, including  Feynman diagrams, category theory, and planar algebra.  

Category theory, introduced by Eilenberg and MacLane~\cite{Eilenberg-MacLane}, developed into a picture language, with transformations that led to mathematical pictures.   Feynman's use of diagrams to represent polynomial interactions in quantum many-body processes or quantum field theory gives an  early illustration of these concepts in physics~\cite{Feynman}. Extra structure on category pictures arose in Jones' theory of 
link invariants~\cite{Jones85,Jones87} and planar algebras~\cite{JonesPA}, and the general framework of topological quantum field theory \cite{Witten88, Atiyah88}.

Although pictures in mathematics are far from new,  we believe we have gained new insights into mathematics by using pictures somewhat differently from in the past. This led to our current ``mathematical picture language project.''  We have been able to prove some interesting theoretical results about pictures, and apply them to different situations.  

One aspect of our approach which we find important is the mathematical analysis of pictures, and how one might  formulate a theory complementing the study of the use of pictures in  topology and geometry.  The analytic aspect of pictures in topological quantum field theory (TQFT) is a less-developed area than its topological  and algebraic aspects, and  has great  potential for future advances. {We also incorporate many insights from using a pictorial  Fourier transformation~\cite{QFA}.}

Another notion we emphasize is proof through pictures.  We wish to distinguish focusing on general mathematical properties of pictures,  from using pictures in a particular concrete mathematical model.  In other words, we aim to distinguish between the notion of the properties of a picture language {\bf L} on the one hand, from its use through a simulation {\bf S} to model a particular  reality {\bf R} by mathematics.  This distinction between {\bf L} and {\bf R} parallels the distinction in linguistics between \textit{syntax} and \textit{semantics}.

We propose that it is interesting to prove a  result about the language  {\bf L}, and thereby through simulation ensures results in {\bf R}. One can use a single picture language {\bf L}  to simulate several different mathematical areas. In fact a theorem in {\bf L} can ensure different theorems in different mathematical subjects \RR1, \RR2, etc., as a consequence of different simulations \Sim1, \Sim2, etc.  Different configurations of the hands of the clock reveal the interrelation between picture proofs for seemingly unrelated mathematical results.
It is also important to distinguish between two types of concepts in {\bf R} that we simulate by a given {\bf S}. These may be {\it real} concepts, or they may be {\it virtual}. This distinction is not absolute, but depends on what language and simulation one considers. 

\subsection{Virtual Processes}
The mathematical interpretation of Feynman diagrams 
arises from the pictorial representation of Gaussian moments.  These occur naturally in real, scalar quantum field theory with a quadratic potential,  through analytic continuation: one replaces  the Schr\"odinger equation with  the parabolic heat equation, the solution to which is described by a Gaussian measure on the space of random fields.  The moments of this measure of degree $2n$ are the sums of  $(2n-1)!!$ terms, each represented as a Feynman diagram. 

Perturbation of the Gaussian by a  cubic potential leads to the physical interpretation by Feynman and Wheeler, and describes each diagram as a trajectory of particles that interact at the vertices of the diagram. 
An elementary Feynman diagram for the scattering of two particles of mass $m$ with energy-momentum $p,q$  colliding with and exchanging a third particle with energy-momentum $k$ is illustrated in Figure \ref{FeynmanDiagram}:
\newpage
\[
\tikz{
\draw (-.2,-1)--(0,-.1);%(0,3.5); 
\draw (0,-.2)--(-.1,1);%(0,3.5); 
\draw (1.2,-1)--(1,.1);%(0,3.5); 
\draw (1,.1)--(1.2,1);%(0,3.5); 
\draw (0,-.1)--(1,.1);%(0,3.5); 
\node at (-.4,-.9) {$\scriptstyle p$};
\node at (1.45,-.9) {$\scriptstyle q$};
\node at (-.3,-.9+1.9) {$\scriptstyle p'$};
\node at (1.45,-.9+1.9) {$\scriptstyle q'$};
\node at (.5,.2) {$\scriptstyle k$};
\node at (0,-.1) {$\scriptscriptstyle\bullet$};
\node at (1,.1) {$\scriptscriptstyle\bullet$};
}
\]
\vskip -.79cm
\begin{figure}[h]
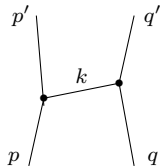

\caption{\small Virtual particle with energy-momentum $k$ in a Feynman  diagram arising from a cubic perturbation.\label{FeynmanDiagram}}
\end{figure}

After inverse analytic continuation to physical time, the energy-momentum $4$-vector $p=(E,\vec p)$ of a real, physical particle with mass-$m$ particle lies on the energy-momentum hyperboloid $E^{2}-{\vec p}^{\,2}=m^{2}$ with $E\geq 0$.  One denotes this identity as the Minkowski inner product $\lra{p,p}_{M}=m^{2}$, and one identifies the particle with its energy-momentum~$p$.  

{One can represent the collision of two particles with energy-momentum $p$ and $q$ respectively, by the exchange of a particle with energy-mo{\-}mentum~$k$. We illustrate this by the  Feynman diagram in Figure~\ref{FeynmanDiagram}. Our point is that if $p$ and $q$ are energy-momentum vectors of two  real particles of mass $m$,  and if energy-momentum is conserved at each vertex of the Feynman diagram, i.e. $p=p'+k$ and $q=q'-k$,  then the energy-momentum $k$ cannot represent another real particle. In other words,  the vector $k$  cannot both be real and satisfy $\lra{k,k}_{M}=\mu^{2}$ for real $\mu$.  Thus one says the Feynman diagram picture of scattering requires the exchange of a ``virtual'' particle, with energy-momentum $k$.    Other corrections to this elementary mediation involve the exchange of many virtual particles.}

 While some physicists would prefer to suppress such virtual effects, these concepts are very useful in many problems---especially when they illustrate symmetries that otherwise would  remain hidden. This illustration of a virtual effect illustrates how  virtual effects run through much of modern mathematics: from Fourier duality, to algebraic identities, as well as to information science  and quantum  error correction.

\subsection{Start of our Project}
Our collaboration on pictures began in 2015 with our study of planar algebras and  statistical physics. We  showed at that time that there is a  geometric proof of the reflection positivity (RP) property for certain statistical mechanics models~\cite{JL-PPA}.  To do this we combined insights of author ZL on subfactor theory and planar algebras with insights of author AJ on parafermions, to yield a new type of planar algebra, well-suited for the description of quantum spin systems with $d$ individual states of a single particle; we call this parafermion planar para algebra (PAPPA) of order $d$, and we interpreted this parameter as a {\it charge} in the abelian group $\mathbb{Z}_{d}$.  This is a special example of a family of generalizations of planar algebras that ZL had considered. 

We found that  PAPPA gives a topological interpretation of protocols in quantum information, and in fact we providedxx a dictionary to translate in both directions between algebraic operations with Pauli matrices, Fourier transform, and measurements (on the one hand) and pictorial representations by PAPPA (on the other).  Ultimately we were able to use PAPPA to give a method of topological design for new protocols.  At that point it started to become clear to us, that central to our methods were certain mathematical languages.  The PAPPA gave a language of ``braided charged strings'' in a plane.  

We later discovered that is it useful to introduce a redundancy  by replacing  two strings with four.  Then the real physics of a single spin lies in a $d$-dimensional physically-relevant subspace of a $d^{2}$-dimensional Hilbert space that is mostly virtual.\footnote{For example, physical  subspaces in larger virtual spaces arise in many other physics contexts, of which we mention three:  { 1.} One recovers a reflection-positive subspace by pairing random fields with their reflection. { 2.} One recovers a single-photon Hilbert space of states by considering the Gupta-Bleuler analysis of defining a positive inner product on a gauge-invariant subspace of single photon configurations. { 3.} In non-abelian gauge theory one has other methods related to Becchi, Rouet, Stora, and Tyutin to single out physical states with non-negative norm.}  Implementing this correctly for multi-particle states led to the three-dimensional quon language.

One can view quon as a topological quantum field theory having certain defects, given by a special set of charged strings.  This language had immediate application both in quantum information and in algebra, and eventually for other things as well.

What is perhaps  more surprising is that one can develop a branch of Fourier analysis based on transformations of pictures.  The pictorial Fourier transform $\FS$, that we call the string Fourier transform (SFT) plays a central role in many applications of our picture languages. 
Over the course of time, we shifted focus to thinking of the languages as primary, and their application to various problems in different mathematical areas as secondary.  We have found interesting things in algebra, topology, representation theory, as well as in mathematical physics.  

\section{The Two String Language\label{Sect:TwoStrings}}
We describe how our pictures, that evolved from diagrams in planar algebra, became useful in quantum information. 
We began to study picture languages in an effort to combine our work on planar algebras with the notion of parafermions \cite{JL-PPA}.  Following a suggestion from a student, Alex Wozniakowski,  we began to study quantum information, and in particular communication  protocols.
As communication only changes the physical location of information, but not information itself, it is natural to simulate this process by topological isotopy.
In this way, we introduced a method to design multi-partite communication protocols using virtual concepts in our picture language in \cite{JLW-1,JLW-2}.
In our original language, a $d$-state quantum particle is represented by a vector (qudit)  in a $d$-dimensional Hilbert space. 

\subsection{Hilbert Space of Charged Strings}
One way to interpret Hilbert space vectors in the  theory of planar algebras is to represent the vector by a cap with two outputs. In our variation, we add a label $k\in\mathbb{Z}_{d}$, that we call charge and draw
\be\label{Basis-k}
\raisebox{-.8cm}{\scalebox{1.4 }{\tikz{
\draw (1/3,0)--(1/3,1/3)  arc (180:0:1/3) -- (1,0);
\node at (.5+1/2.7,1/3) {$\scriptstyle k$};
\node at (2/3,-.2){\size{output}};
}}}, 
\text{  where we omit the label if  $k=0$.}
\ee
The adjoint of a picture is a vertical reflection that reverses charges, turning outputs to inputs and inputs to outputs, 
\[
\scalebox{1.4}{\raisebox{-.2cm}{$\tikz{
\draw (0,0)--(0,1/3)  arc (180:0:1/3) -- (2/3,0);
\node at (1/3+1/5,1/3) {$\scriptstyle k$};
}^{*}$} =
\raisebox{-.2cm}{\tikz{$
\node at (1/3+1/15,-1/3) {$\scriptstyle-k$};
\draw (0,0)--(0,-1/3)  arc (-180:0:1/3) -- (2/3,0);
$}
}}\;.
\]
A general vector $W$ then becomes a box with two outputs,
\be\label{Vector-1}
\raisebox{-.65cm}{
\tikz{
\draw (0,2-.75) rectangle (1,2.5-.75);
\draw (.25,.75) -- (.25,1.25);
\draw (.75,.75) -- (.75,1.25);
\node at (.5,1.5) {$\scriptstyle W$};
}} \;, \phantom{pace}
\ee
 and the positivity of the inner product for a vector $W$ is written 
\[
\raisebox{-.7cm}{
\tikz{
\draw (0,2-.75) rectangle (1,2.5-.75);
\draw (0,-.5+.75) rectangle (1,.75);
\draw (.25,.75) -- (.25,1.25);
\draw (.75,.75) -- (.75,1.25);
\node at (.53,.5) {$\scriptstyle W^{*}$};
\node at (.5,1.5) {$\scriptstyle W$};
}} \,\geq0\;.
\]

A transformation $T$ has inputs and outputs.  A transformation on vectors has two of each,
\[
\scalebox{1.6}{\tikz{
\draw (-1/6,1/6) rectangle (1/6+2/3,1-1/6);
\draw (0,-1/6)--(0,1/6);
\draw (2/3,-1/6)--(2/3,1/6);
\draw (0,1+1/6)--(0,1-1/6);
\draw (2/3,1+1/6)--(2/3,1-1/6);
\node at (1/3,1/2) {{$\scriptstyle T$}};
\node at (1/3,-.3){{\size output}};
\node at (1/3,1.3){{\size input}};
}}
\raisebox{1.5cm}{\ ,}\phantom{X}
\]
that acts on a vector by vertical multiplication:
\[
\scalebox{1.2}{\tikz{
\draw (0,1/6+1)--(0,2/6+1)  arc (180:0:1/3) -- (2/3,1/6+1);
\node at (1/3+1/6,15/12) {$\scriptstyle k$};
\draw (-1/6,1/6) rectangle (1/6+2/3,1-1/6);
\draw (0,-1/6)--(0,1/6);
\draw (2/3,-1/6)--(2/3,1/6);
\draw (0,1+1/6)--(0,1-1/6);
\draw (2/3,1+1/6)--(2/3,1-1/6);
\node at (1/3,1/2)  {{${\scriptstyle T}$}};
\node at (1/3,-.3){{\size output}}
}}
\raisebox{1cm}{\ .}\phantom{X}
\]
The circle graph is called the quantum dimension,
\[
\raisebox{-.375cm}{\tikz{
\draw (0,0) circle [radius=0.5];
\node at (-1,0) {$\delta =$};
}}
\;,
\quad \text{and for $k\neq0$,  \ }
\raisebox{-.375cm}{\tikz{
\draw (0,0) circle [radius=0.5];
\node at (-1,0) {$0 =$};
\node at (.3,0) {$k$};
}}
\;.
\]
In order to ensure many of the following relations, it is necessary to choose the quantum dimension $\delta=\sqrt{d}$. This is the smallest value of $\delta$ consistent with the trace of an arbitrarily large system being positive, so it yields a Hilbert space inner product on pictures~\cite{JonesPA,JL-PPA}.

\subsection{Relations}
Assuming multiplication of charges on a line is additive, the basis is orthogonal and the identity transformation $I$ on qudit  yields the resolution of the identity,  $I= \sum_{k\in\mathbb{Z}_{d}} P_{k}$.    This is given in terms of the projections $P_{k}$ onto the vectors \eqref{Basis-k}, as
\be\label{Projection-k}
 P_{k}= \frac{1}{\delta} 
\raisebox{-.5cm}{
\tikz{
\fqudit {0}{-3.5\nn}{1\mn}{.5\nn}{\phantom{kk} k}
\fmeasure {0}{0}{1\mn}{.5\nn}{\ -k}
}}\;.
\ee
In terms of pictures this is the relation,
\be\label{Identity-Relation}
\raisebox{-.36cm}{
\tikz{
\draw (0,0)--(0,1);
\draw (2/3,0)--(2/3,1);
}}
=  \frac{1}{\delta} \sum_{k\in\mathbb{Z}_{d}}
\raisebox{-.5cm}{
\tikz{
\fqudit {0}{-3.5\nn}{1\mn}{.5\nn}{\phantom{kk} k}
\fmeasure {0}{0}{1\mn}{.5\nn}{\ -k}
}}
\;.
\ee

Let $q=e^{2\pi i /d}$.  We assume vertical motion of the charges to reverse the order of multiplication 
\be\label{PI-Relation}
\raisebox{-.4cm}{
\begin{tikzpicture}
\draw (0,0) --(0,01);
\node (0,0) at (-0.15,0.2) {$k$};
\draw (1.15,0) --(1.15,1);
\node (1.2,0) at (1,0.8) {$\ell$};
\end{tikzpicture}
}
=q^{k\ell}
\raisebox{-.4cm}{
\begin{tikzpicture}
\draw (0,0) --(0,01);
\node (0,0) at (-0.15,0.8) {$k$};
\draw (1.15,0) --(1.15,1);
\node (1.2,0) at (1,0.2) {$\ell$};
\end{tikzpicture}}\;.
\ee
We call this a ``para isotopy'' relation.
It is natural to consider the para-isotopy interpolation for the charges at the same level, and to define 
\be\label{Interpolation-Relation}
\raisebox{-.4cm}{
\begin{tikzpicture}
\draw (0,0) --(0,01);
\node (0,0) at (-0.15,0.2) {$k$};
\draw (1.15,0) --(1.15,1);
\node (1.2,0) at (.85,0.8) {$-k$};
\end{tikzpicture}
}
=\zeta^{-k^{2}}
\raisebox{-.4cm}{
\begin{tikzpicture}
\draw (0,0) --(0,01);
\node (0,0) at (-0.15,0.5) {$k$};
\draw (1.15,0) --(1.15,1);
\node (1.2,0) at (.85,0.5) {$-k$};
\end{tikzpicture}}\;,
\ee
where 
\be\label{zeta}
\zeta=q^{1/2}\;,
\quad\text{and}\quad
\zeta^{d^{2}}=1\;,
\ee
ensuring $\zeta^{k^{2}}=\zeta^{(k+d) ^{2}}$.  We also use the following relation to move a charge across a string with fixed endpoints:
\be\label{SF-Relation}
   \raisebox{-0.2cm}{
   \tikz{
   \fqudit {0}{0}{1\mn}{1\nn}{}
   \node at (-3\mn,1\nn) {{$\hskip -2.5cm k$}};
   }}
    =\zeta^{k^2}~~
    \raisebox{-0.2cm}{\tikz{\fqudit {0}{0}{1\mn}{1\nn}{}
       \node at (-1.4\mn,1\nn) {{$k$}};
    }}\;,
 \quad\text{so}\quad
   \raisebox{-0.2cm}{
   \tikz{
   \fmeasure {0}{0}{1\mn}{1\nn}{}
   \node at (-1.3\mn,-.8\nn) {$\scriptstyle -k$};
   }}
    =\zeta^{k^2}~~
    \raisebox{-0.2cm}{\tikz{\fmeasure{0}{0}{1\mn}{1\nn}{}
   \node at (.7\mn,-.8\nn) {$\scriptstyle -k$};
       \node at (-1.6\mn,1\nn) {};
    }}\;.
\ee

\subsection{Basic Transformations}
The basic transformations in quantum information protocols involve the identity, the Pauli matrices $X,Y,Z$, the Fourier transform (or Hademard) matrix $F$, and the phase (Gaussian) transformation $G$ that implements a braid.  Pictorially, 
\be
I=
\raisebox{-.4cm}{
\tikz{
\draw (0,0)--(0,1);
\draw (2/3,0)--(2/3,1);
}}\;,\qquad
X=
\raisebox{-.4cm}{
\tikz{
\draw (0,0)--(0,1);
\draw (2/3,0)--(2/3,1);
\node at (1.5/3,1/2) {\scriptsize 1};
}}\;,\qquad
Y=
\raisebox{-.4cm}{
\tikz{
\draw (0,0)--(0,1);
\draw (2/3,0)--(2/3,1);
\node at (-.5/3,1/2) {\scriptsize -1};
}}\;,\qquad
Z=
\raisebox{-.4cm}{
\tikz{
\draw (0,0)--(0,1);
\draw (2/3,0)--(2/3,1);
\node at (-.5/3,1/2) {\scriptsize 1};
\node at (1.5/3,1/2) {\scriptsize -1};
}}\;.
\ee

\subsection{The Analytic Fourier Transform}
Consider the projection $P_{k}$ onto the $k^{\rm th}$ basis vector~\eqref{Projection-k}.
The ordinary Fourier transform $F$ acting on this picture $P_{k}$ as a function of $k$ is
\[
FP_{k} = \frac{1}{\sqrt{d}} \sum_{\ell\in\mathbb{Z}_{d}} q^{k\ell} P_{\ell}\;,
\]
or 
\be\label{Fourier}
F 
\raisebox{-.5cm}{
\tikz{
\fqudit {0}{-3.5\nn}{1\mn}{.5\nn}{\phantom{kk} k}
\fmeasure {0}{0}{1\mn}{.5\nn}{\ -k}
}}
= \frac{1}{\sqrt{d}}\sum_{\ell\in\mathbb{Z}_{d}} q^{k\ell}
\raisebox{-.5cm}{
\tikz{
\fqudit {0}{-3.5\nn}{1\mn}{.5\nn}{\phantom{kk} \ell}
\fmeasure {0}{0}{1\mn}{.5\nn}{\ -\ell}
}}\;.
\ee

\subsection{The pictorial Fourier transform}
The pictorial Fourier transform $\FS$, or string Fourier transform (SFT),  rotates the one-qudit transformation  picture by $90^{\circ}$. In other words, 
\[
\raisebox{-.5cm}{
\tikz{
\node at (-1/3 + -1/6,1/2) {{$\mathfrak{F}_{s}$}};
\draw (-1/6,1/6) rectangle (1/6+2/3,1-1/6);
\draw (0,-1/6)--(0,1/6);
\draw (2/3,-1/6)--(2/3,1/6);
\draw (0,1+1/6)--(0,1-1/6);
\draw (2/3,1+1/6)--(2/3,1-1/6);
\node at (1/3,1/2) {\size{$T$}};
}}
=
\raisebox{-.5cm}{
\tikz{
\draw (-1/6,1/6) rectangle (1/6+2/3,1-1/6);
\draw (0,-1/6)--(0,1/6);
\draw (2/3,-1/6)--(2/3,1/6);
\draw (0,1+1/6)--(0,1-1/6);
\draw (2/3,1+1/6)--(2/3,1-1/6);
\node at (1/3,1/2) {\size{$\mathfrak{F}_{s} T$}};
}}
=
\raisebox{-.6cm}{\scalebox{.7}[.8]{
\tikz{
\draw (-1/6,1/6) rectangle (1/6+2/3,1-1/6);
\draw (0,-1/6) arc (0:-180:.25);
\draw (2/3,1+1/6) arc (180:0:.25);
\draw (-.5,-1/6)--(-.5,8/6);
\draw (2/3+.5,-2/6)--(2/3+.5,1+1/6);
\draw (0,-1/6)--(0,1/6);
\draw (2/3,-2/6)--(2/3,1/6);
\draw (0,1+ 2/6)--(0,1-1/6);
\draw (2/3,1+1/6)--(2/3,1-1/6);
\node at (1/3,1/2) {\size{$T$}};
}}}\;.
\]
This  transform $\FS$ applied to a neutral transformatfon $T$ generalizes the Fourier transform $F$. 
\begin{theorem} \label{Thm:SFT-F} On diagonal $d\times d$ matrices, the transformations $\FS$ and $F$ agree $\FS=F$.  The picture theorem is, compare with \eqref{Fourier},
\be
\fbox{$\FS 
\raisebox{-.5cm}{
\tikz{
\fqudit {0}{-3.5\nn}{1\mn}{.5\nn}{\phantom{kk} k}
\fmeasure {0}{0}{1\mn}{.5\nn}{\ -k}
}}
=
\raisebox{-.4cm}{
\begin{tikzpicture}
\draw (0,0) --(0,01);
\node  at (-0.15,0.5) {$\scriptstyle k$};
\draw (.8,0) --(.8,1);
\node  at (.55,0.5) {$\scriptstyle -k$};
\end{tikzpicture}}
=F
\raisebox{-.5cm}{
\tikz{
\fqudit {0}{-3.5\nn}{1\mn}{.5\nn}{\phantom{kk} k}
\fmeasure {0}{0}{1\mn}{.5\nn}{\ -k}
}}
$}
\;.
\ee
\end{theorem}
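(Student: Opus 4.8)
The plan is to use linearity together with the fact that the rank-one projections $P_k$ of \eqref{Projection-k} span the algebra of diagonal $d\times d$ matrices: since both $\FS$ and $F$ are linear in the transformation to which they are applied, it suffices to establish the boxed picture identity for a single $P_k$, after which the operator statement $\FS=F$ on diagonal matrices follows by taking linear combinations. I would therefore fix $k\in\mathbb{Z}_d$ and prove the two equalities in the box separately: first that $\FS P_k$ equals the two-strand picture carrying charges $k$ and $-k$, and then that this two-strand picture equals $FP_k=\frac{1}{\sqrt d}\sum_{\ell}q^{k\ell}P_\ell$ from \eqref{Fourier}.

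For the first equality I would start from the explicit cap--cup presentation of $P_k$ in \eqref{Projection-k}, in which the two input strands are joined by a cap labelled $k$ and the two output strands by a cup labelled $-k$, so that $P_k$ factors through the empty diagram up to the normalisation $1/\delta$. Applying the definition of $\FS$ as the $90^\circ$ rotation of the one-qudit box drags the two input legs around to sit beside the two output legs; after the rotation the cap and cup become arcs that, once their common boundary is straightened, reconnect the four external legs into two vertical through-strands, carrying the charges $k$ and $-k$ that were originally on the cap and cup. To carry out the straightening rigorously I would use \eqref{SF-Relation} to slide each charge across the bent strand on which it sits, and the interpolation relation \eqref{Interpolation-Relation} whenever two opposite charges come to lie at the same height, each such move contributing a factor $\zeta^{\pm k^2}$.

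For the second equality I would insert the resolution of the identity \eqref{Identity-Relation} between the two parallel strands, writing the identity on the middle as $\frac1\delta\sum_\ell$ of a cap labelled $\ell$ over a cup labelled $-\ell$. Pulling the charge $k$ on the left strand and the charge $-k$ on the right strand past this summed cap--cup by the para-isotopy relation \eqref{PI-Relation} produces exactly the phase $q^{k\ell}$ in the $\ell$-th summand, and re-collecting each summand into the form \eqref{Projection-k} identifies the result as $\frac{1}{\sqrt d}\sum_\ell q^{k\ell}P_\ell$, which is $FP_k$ by \eqref{Fourier}.

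I expect the main obstacle to be the phase and normalisation bookkeeping rather than the topology. Concretely, one must check that the factors $\zeta^{\pm k^2}$ generated by \eqref{SF-Relation} and \eqref{Interpolation-Relation} during the straightening cancel in pairs, so that no spurious $k$-dependent phase survives and the rotated diagram is exactly the charge-$k$, charge-$(-k)$ two-strand picture; this cancellation relies on the identity $\zeta=q^{1/2}$ and the periodicity $\zeta^{k^2}=\zeta^{(k+d)^2}$ recorded in \eqref{zeta}. One must likewise track the factors of $\delta=\sqrt d$ coming from the closed loops created and removed during the rotation and from the insertion of \eqref{Identity-Relation}, verifying that they combine to the single normalisation $1/\sqrt d$ appearing in $FP_k$. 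It is precisely here that the choice $\delta=\sqrt d$ is used, and confirming these cancellations is the delicate heart of the argument.
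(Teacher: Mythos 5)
Your proposal follows essentially the same route as the paper's proof: both apply the rotation definition of $\FS$ to $P_k$, slide charges using \eqref{SF-Relation} and the interpolation convention \eqref{Interpolation-Relation} to reach the two-strand middle picture, then insert the resolution of the identity \eqref{Identity-Relation}, invoke para-isotopy \eqref{PI-Relation} for the phase $q^{k\ell}$, and use $\delta=\sqrt d$ for the normalisation. The phase cancellation you flag as the delicate point is exactly what the paper carries out (the two factors of $\zeta^{k^2}$ combine to $q^{k^2}$ and are absorbed by translating the summation index $\ell\mapsto\ell-k$), so your plan is sound and matches the published argument.
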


\begin{proof}  The proof takes only a couple of lines.  We use the definition of $\FS$ and the properties \eqref{Projection-k} and \eqref{Fourier} to write
\beqs
\FS 
\raisebox{-.5cm}{
\tikz{
\fqudit {0}{-3.5\nn}{1\mn}{.5\nn}{}
\fmeasure {0}{0}{1\mn}{.5\nn}{}
\node at (.45,-.2) {$\scriptstyle -k$};
\node at (.55,-.95) {$\scriptstyle k$};
}}
&=&\zeta^{k^{2}}\, \FS \hskip -.25cm
\raisebox{-.5cm}{
\tikz{
\fqudit {0}{-3.5\nn}{1\mn}{.5\nn}{}
\fmeasure {0}{0}{1\mn}{.5\nn}{}
\node at (-.25,-.2) {$\scriptstyle -k$};
\node at (.55,-.95) {$\scriptstyle k$};
}}
= \zeta^{k^{2}}
\raisebox{-.44cm}{
\tikz{
\node at (-.15,.2)  {$\scriptstyle k$};
\node at (.55,.8)  {$\scriptstyle -k$};
\draw (0,0)--(0,1);
\draw (.8,0)--(.8,1);
}}
= 
\frac{1}{\delta}\sum_{\ell\in\mathbb{Z}_{d}}  \zeta^{k^{2}} 
\raisebox{-.6cm}{
\tikz{
\fqudit {0}{-3.5\nn}{1\mn}{.5\nn}{}
\fmeasure {0}{0}{1\mn}{.5\nn}{}
\node at (-.1,-1) {$\scriptstyle k$};
\node at (.5,-.9) {$\scriptstyle \ell$};
\node at (.4,-.1)  {$\scriptstyle -k$};
\node at (.4,-.3)  {$\scriptstyle -\ell$};
}}\\
&=& 
\frac{1}{\delta}\sum_{\ell\in\mathbb{Z}_{d}}   \zeta^{k^{2}} q^{k\ell}
\raisebox{-.6cm}{
\tikz{
\fqudit {0}{-3.5\nn}{1\mn}{.5\nn}{}
\fmeasure {0}{0}{1\mn}{.5\nn}{}
\node at (-.1,-.9) {$\scriptstyle k$};
\node at (.5,-1) {$\scriptstyle \ell$};
\node at (.4,-.1)  {$\scriptstyle -k$};
\node at (.4,-.3)  {$\scriptstyle -\ell$};
}}
= \frac{1}{\delta}\sum_{\ell\in\mathbb{Z}_{d}} q^{k^{2}+k\ell}
\raisebox{-.6cm}{
\tikz{
\fqudit {0}{-3.5\nn}{1\mn}{.5\nn}{}
\fmeasure {0}{0}{1\mn}{.5\nn}{}
\node at (.5,-1.1) {$\scriptstyle k$};
\node at (.5,-.9) {$\scriptstyle \ell$};
\node at (.4,-.1)  {$\scriptstyle -k$};
\node at (.4,-.3)  {$\scriptstyle -\ell$};
}}\\
&=& \frac{1}{\delta}\sum_{\ell\in\mathbb{Z}_{d}} q^{k\ell}
\raisebox{-.6cm}{
\tikz{
\fqudit {0}{-3.5\nn}{1\mn}{.5\nn}{}
\fmeasure {0}{0}{1\mn}{.5\nn}{}
\node at (.5,-.95) {$\scriptstyle \ell$};
\node at (.4,-.2)  {$\scriptstyle -\ell$};
}}
=\frac{\sqrt{d}}{\delta}\,F
\raisebox{-.5cm}{
\tikz{
\fqudit {0}{-3.5\nn}{1\mn}{.5\nn}{\phantom{kk} k}
\fmeasure {0}{0}{1\mn}{.5\nn}{\ -k}
}}
=F
\raisebox{-.5cm}{
\tikz{
\fqudit {0}{-3.5\nn}{1\mn}{.5\nn}{\phantom{kk} k}
\fmeasure {0}{0}{1\mn}{.5\nn}{\ -k}
}}\;.
\eeqs
In the first equality we use the relation \eqref{SF-Relation}, while the second is the definition of the SFT.  In the third identity we insert the resolution of the identity relation \eqref{Identity-Relation}.  
Next we use the para-isotopy relation \eqref{PI-Relation}, and after that we again apply \eqref{SF-Relation}.
In the next-to-last equality we translate the sum,  replacing $\ell\in\mathbb{Z}_{d}$ with $\ell-k$. Thus $\FS=F$ on these projections.  Use the definition of para-isotopy interpolation to identify the middle term in the statement of the picture theorem, as the third expression in the calculation above. 
\end{proof}

\section{The Reflection-Positivity Property}
The reflection-positivity property has had wide applications in mathematical physics (quantum field theory, statistical physics). It is central in the mathematical study of phase transitions and also  the relations between the two subjects. It is also important when one considers abstract Fourier analysis.

 In this section we use a single  string (tensor network) notation for a vector $W\in\Hil$, representing  it by the picture \raisebox{-.35cm}{\scalebox{.6}{$\tikz{
\draw (0,2) rectangle (1,2.5);
\draw (.5,1.5) -- (.5,2);
\node at (.5,2.24) {$\scriptstyle W$};
}$}}\;, in place of \eqref{Vector-1}.

\subsection{Reflection}
In the definition of the string Fourier transform $\FS$, we considered  a finite-dimensional Hilbert space $\Hil$ and its dual $\Hild=\theta \Hil$.  The map $\theta$ denotes the Riesz representation map  $\Hil\to\Hild$. 
Then duality $\theta$ and the adjoint $^{*}$ are related; 
\be\label{TurnAround}
\scalebox{.7}{
\raisebox{-.65cm}{
\tikz{
\draw (0,2) rectangle (1,2.5);
\draw (.5,1.5) -- (.5,2);
\node at (.5,2.24) {$\scriptstyle \theta(W)$};
}}\ =
\raisebox{-.65cm}{
\tikz{
\draw (0,2) rectangle (1,2.5);
\draw (.5,2.5) -- (.5,3);
\node at (.5,2.24) {$\scriptstyle W^{*}$};
\fqudit{.5}{3}{-.5}{0}{}
\draw (1.5,3) -- (1.5,1.5);
}}}\;,
\quad\text{\small and}\quad
\scalebox{.7}{
\raisebox{-.2cm}{
\tikz{
\draw (0,2) rectangle (1,2.5);
\draw (.5,2.5) -- (.5,3);
\node at (.5,2.24) {$\scriptstyle \theta(W)^{*}$};
}}\ =
\raisebox{-1.15cm}{
\tikz{
\draw (0,2) rectangle (1,2.5);
\draw (.5,1.5) -- (.5,2);
\node at (.5,2.24) {$\scriptstyle W$};
\fmeasure{.5}{1.5}{-.5}{0}{}
\draw (1.5,3) -- (1.5,1.5);
}}}\;.
\ee

\subsection{Reflection Positivity}
We formulate the reflection-positivity property that we considered in~\cite{JL-PPA} in a recent work~\cite{JL-LW} in terms of $\FS$ acting as  a  map
	\[
	\FS:\hom(\Hild\otimes \Hil) \to \hom(\Hil\otimes \Hild)\;.
	\] 
\begin{definition}
A map $H\in \hom(\Hild\otimes \Hil)$ has the RP property if for every $V, W\in \Hil$, 
	\[
	\lra{\theta(V)\otimes V, e^{-\beta H}\theta(W)\otimes W}_{\Hild\otimes \Hil} \geq0\;.
	\]
\noindent Pictorially, the RP property means that  for all $V,W\in\Hil$,
\[
\raisebox{-1.45cm}{
\tikz{
\draw (0,2) rectangle (1,2.5);
\draw (0,-.5) rectangle (1,0);
\draw (0,.5) rectangle (2.5,1.5);
\draw (.5,0) -- (.5,.5);
\draw (.5,1.5) -- (.5,2);
\draw (1.5,2) rectangle (2.5,2.5);
\draw (1.5,-.5) rectangle (2.5,0);
\draw (2,0) -- (2,.5);
\draw (2,1.5) -- (2,2);
\node at (1.3,1) {$\scriptstyle e^{-\beta H}$};
\node at (.55,.-.26) {$\scriptstyle \theta(V)^{*}$};
\node at (2.05,.-.24) {$\scriptstyle V^{*}$};
\node at (.5,2.24) {$\scriptstyle\theta(W)$};
\node at (2,2.25) {$\scriptstyle W$};
}}\geq0\;.
\]
\end{definition}
Here vertical reflection symmetry of the picture is not apparent.  However there appears to be some horizontal symmetry.  We relate the Fourier transform of the  terms in the power series of  $e^{-\beta H}$ to the convolution powers of the Fourier transform on $-H$.  

\begin{remark}
Considering the tensor product $\Hild\otimes \Hil$ means we consider here the special ``bosonic'' case. The construction can be modified to cover possible twisting. We now outline the pictorial proof of the following result.  
\end{remark}

\begin{theorem}
If $\FS(-H)\geq0$, then $\FS(e^{-\beta H})\geq0$, and  $H$ has the RP property.
\end{theorem}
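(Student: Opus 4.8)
The plan is to separate the statement into two parts: the implication $\FS(-H)\geq 0 \Rightarrow \FS(e^{-\beta H})\geq 0$ (an operator-positivity statement), and the equivalence of $\FS(e^{-\beta H})\geq 0$ with the RP property of $H$. For the first part, for $\beta\geq 0$ I would expand $e^{-\beta H}=\sum_{n\geq 0}\frac{\beta^n}{n!}(-H)^n$ as a norm-convergent series and apply $\FS$ term by term. The key tool is the convolution theorem for the string Fourier transform: exactly as the ordinary Fourier transform sends pointwise products to convolutions, $\FS$ sends the operator product (vertical stacking of transformation boxes) to the convolution $*$ that is dual to it under the $90^\circ$ rotation, so that $\FS\big((-H)^n\big)=\big(\FS(-H)\big)^{*n}$. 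Hence
\[
\FS(e^{-\beta H})=\sum_{n\geq 0}\frac{\beta^n}{n!}\,\big(\FS(-H)\big)^{*n}.
\]
Granting that convolution preserves positivity, each convolution power of the positive operator $\FS(-H)$ is positive (the $n=0$ term being the convolution unit $\FS(I)$, which is itself positive), so the series is a nonnegative-coefficient sum of positive operators and thus $\FS(e^{-\beta H})\geq 0$.

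For the second part, I would begin from the pictorial RP diagram in the definition, where $\theta(W)\otimes W$ sits on top and $\theta(V)^{*}\otimes V^{*}$ on the bottom, with $e^{-\beta H}$ in between. Using the turn-around relation \eqref{TurnAround} to trade each $\theta(W)$ for the reflected adjoint of $W$ (and likewise $\theta(V)^{*}$ for $V$), one folds the diagram about its horizontal axis; in the folded picture the central box appears rotated by $90^\circ$, i.e. as $\FS(e^{-\beta H})$, sandwiched between a single vector $\xi=\xi(V,W)\in\Hil\otimes\Hild$ and its adjoint. The RP inner product is then literally $\lra{\xi,\FS(e^{-\beta H})\,\xi}$, so its nonnegativity for all $V,W\in\Hil$ is exactly the operator inequality $\FS(e^{-\beta H})\geq 0$. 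Combined with the first part, this yields both $\FS(e^{-\beta H})\geq 0$ and the RP property.

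The main obstacle is the positivity-preservation step: proving that the convolution of two positive pictures is again positive, hence $\big(\FS(-H)\big)^{*n}\geq 0$ for all $n$. The cleanest route is to realize the convolution of $\FS(X)$ and $\FS(Y)$ as a string contraction (partial trace) of a suitably arranged tensor product of $\FS(X)$ and $\FS(Y)$, since partial traces of positive operators are positive; induction on $n$ then finishes. The delicate point is arranging the caps and cups produced by the SFT rotation so that this contraction faithfully represents the convolution, and it is here that the normalization $\delta=\sqrt{d}$ together with the para-isotopy relations \eqref{PI-Relation} and \eqref{SF-Relation} must be used carefully.
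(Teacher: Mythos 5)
Your proposal follows essentially the same route as the paper: the paper likewise reduces the first claim to expanding $e^{-\beta H}$ in a power series and relating $\FS$ of the terms to convolution powers of $\FS(-H)$ (deferring the positivity-of-convolution step, which you sketch via the quantum Schur product idea, to the reference \cite{JL-LW}), and it proves the RP property by the same folding of the diagram via \eqref{TurnAround} to exhibit the RP pairing as $\lra{V\otimes\theta(W),\,\FS(e^{-\beta H})(V\otimes\theta(W))}$. One small correction: this pairing ranges only over product vectors $V\otimes\theta(W)$, so its nonnegativity is \emph{not} ``exactly'' the operator inequality $\FS(e^{-\beta H})\geq 0$ but a weaker consequence of it --- the paper notes explicitly that operator positivity is a stronger condition than RP requires --- though this does not affect the implication you actually need.
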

\begin{proof}
Pictorially the assumption that  $\FS(-H)\geq0$, means that 
\[
\scalebox{.5}{\raisebox{-1.4cm}{$
\tikz{
\draw (0,.5) rectangle (2.5,1.5);
\draw (.5,-.5) -- (.5,.5);
\draw (.5,1.5) -- (.5,2.5);
\draw (2,-.5) -- (2,.5);
\draw (2,1.5) -- (2,2.5);
\node at (1.2,1) {$\FS (- H)$};
}$}}
=
\scalebox{.5}{\raisebox{-1.4cm}{$
\tikz{
\draw (0,.5) rectangle (2.5,1.5);
\draw (.5,0) -- (.5,.5);
\draw (.5,1.5) -- (.5,2.5);
\fmeasure{-.5}{0}{-.5}{0}{};
\draw (-.5,0) -- (-.5,2.5);
\fqudit{2}{2}{-.5}{0}{}
\draw (2,-.5) -- (2,.5);
\draw (2,1.5) -- (2,2);
\draw (3,2) -- (3,-.5);
\node at (1.2,1) {$- H$};
}$}$\ \geq0$}\;.
\]
We  prove in reference \cite{JL-LW} that positivity of $\FS (-H)$ entails positivity of $\FS(e^{-\beta H})$ for $\beta\geq0$.  Thus, 
\[
\scalebox{.5}{\raisebox{-1.4cm}{$
\tikz{
\draw (0,.5) rectangle (2.5,1.5);
\draw (.5,-.5) -- (.5,.5);
\draw (.5,1.5) -- (.5,2.5);
\draw (2,-.5) -- (2,.5);
\draw (2,1.5) -- (2,2.5);
\node at (1.2,1) {$\FS(e^{-\beta H})$};
}$}}
=
\scalebox{.5}{\raisebox{-1.4cm}{$
\tikz{
\draw (0,.5) rectangle (2.5,1.5);
\draw (.5,0) -- (.5,.5);
\draw (.5,1.5) -- (.5,2.5);
\fmeasure{-.5}{0}{-.5}{0}{};
\draw (-.5,0) -- (-.5,2.5);
\fqudit{2}{2}{-.5}{0}{}
\draw (2,-.5) -- (2,.5);
\draw (2,1.5) -- (2,2);
\draw (3,2) -- (3,-.5);
\node at (1.2,1) {$e^{-\beta H}$};
}$}$\ \geq0$}\;.
\]
The picture proof of the RP property is evident, since $\Hil$ is a Hilbert space. Using \eqref{TurnAround}, one sees that 
\be\label{RPbyPics}
\raisebox{-1.45cm}{
\tikz{
\draw (0,2) rectangle (1,2.5);
\draw (0,-.5) rectangle (1,0);
\draw (0,.5) rectangle (2.5,1.5);
\draw (.5,0) -- (.5,.5);
\draw (.5,1.5) -- (.5,2);
\draw (1.5,2) rectangle (2.5,2.5);
\draw (1.5,-.5) rectangle (2.5,0);
\draw (2,0) -- (2,.5);
\draw (2,1.5) -- (2,2);
\node at (1.3,1) {$\scriptstyle e^{-\beta H}$};
\node at (.55,.-.25) {$\scriptstyle \theta(V)^{*}$};
\node at (2.05,.-.25) {$\scriptstyle V^{*}$};
\node at (.5,3-.75) {$\scriptstyle\theta(W)$};
\node at (2,3-. 75) {$\scriptstyle W$};
}}=
\raisebox{-1.45cm}{
\tikz{
\draw (0,2) rectangle (1,2.5);
\draw (0,-.5) rectangle (1,0);
\draw (0,.5) rectangle (2.5,1.5);
\draw (.5,0) -- (.5,.5);
\draw (.5,1.5) -- (.5,2);
\draw (1.5,2) rectangle (2.5,2.5);
\draw (1.5,-.5) rectangle (2.5,0);
\draw (2,0) -- (2,.5);
\draw (2,1.5) -- (2,2);
\node at (1.3,1) {$\scriptstyle \FS(e^{-\beta H})$};
\node at (.55,.-.25) {$\scriptstyle V^{*}$};
\node at (2.05,.-.25) {$\scriptstyle \theta(W)^{*}$};
\node at (.5,3-.75) {$\scriptstyle V$};
\node at (2,3-. 75) {$\scriptstyle \theta(W)$};
}}
\geq0\;.
\ee
To justify the positivity statement, look at the picture on the right of the equal sign. The  vector on the bottom is the vertical reflection of the vector on top.  Algebraically, the relation \eqref{RPbyPics} between pictures means 
\beqs
&&\hskip-.5in \lra{\theta(V)\otimes V, e^{-\beta H}(\theta(W)\otimes W)}_{\Hild\otimes \Hil}
\\
&&\qquad =
\lra{V \otimes \theta(W), \FS(e^{-\beta H})(V\otimes \theta(W))}_{\Hil\otimes \Hild}\;.
\eeqs
Thus $\FS(e^{-\beta H})\geq0$ for all $\beta\geq0$ guarantees the RP property, though this is a stronger condition than required.
\end{proof}

One can carry the picture proof further.  
Extend $\theta$ to $\Hild\otimes \Hil$, so that $\theta (V\otimes W)=\theta(W)\otimes \theta(V)$.  Let $\theta(H)=\theta H \theta$.
\begin{theorem}
In case $H=H_{-}+H_{0}+H_{+}$ where $H_{+}=\theta(H_{-})\in\hom(\Hil)$, and $\FS(-H_{0})\geq0$, then $H$ has the RP property.
\end{theorem}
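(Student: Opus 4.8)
The plan is to reduce the general statement to the already-proven case ``$\FS(-H_0)\geq 0$ implies $H_0$ has the RP property'' by peeling off the two bulk terms $H_{\pm}$ using their reflection symmetry, and controlling the non-commutativity through a Trotter product formula. First I would record the structural roles of the three summands. Since $H_-\in\hom(\Hil)$ acts on the interior $\Hil$-factor while $H_+=\theta(H_-)$ is its reflection acting on the $\Hild$-factor, the two bulk terms occupy complementary tensor legs and hence commute, and moreover $e^{-tH_+}=\theta\,e^{-tH_-}\,\theta$. The crucial consequence is that the bulk flow preserves the manifold of reflection-doubled vectors on which the RP form lives: one checks directly that
\[
e^{-tH_-}e^{-tH_+}\bigl(\theta(W)\otimes W\bigr)=\theta(W_t)\otimes W_t,\qquad W_t=e^{-tH_-}W,
\]
and correspondingly on the bra side the outermost bulk factor absorbs into $\theta(V)\otimes V$, producing $\theta(V')\otimes V'$ with $V'=e^{-tH_-}V$.

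Next I would insert the symmetric Lie--Trotter formula
\[
e^{-\beta H}=\lim_{n\to\infty}\Bigl(e^{-\frac{\beta}{2n}(H_-+H_+)}\,e^{-\frac{\beta}{n}H_0}\,e^{-\frac{\beta}{2n}(H_-+H_+)}\Bigr)^{n}
\]
and evaluate the RP form $\lra{\theta(V)\otimes V,\,e^{-\beta H}\,\theta(W)\otimes W}$. The guiding model is the single-coupling case: there the bulk factors absorb into the doubled test vectors by the previous step, and the identity \eqref{RPbyPics} rewrites $\lra{\theta(V')\otimes V',\,e^{-\tau H_0}\,\theta(W')\otimes W'}$ as the matrix element $\lra{V'\otimes\theta(W'),\,\FS(e^{-\tau H_0})\,(V'\otimes\theta(W'))}$, which is non-negative because $\FS(-H_0)\geq 0$ forces $\FS(e^{-\tau H_0})\geq 0$ by the preceding theorem.

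The main obstacle is that this absorption cannot be iterated naively across the full Trotter product: applying an interior coupling factor $e^{-\frac{\beta}{n}H_0}$ destroys the doubled structure, and a product of individually RP (equivalently $\FS$-positive) factors need not be positive. What rescues the argument is that $\FS$ converts the spatially-ordered operator product into a single manifestly positive object through its interplay with the reflection. Using that $\FS$ intertwines multiplication with convolution, together with the reflection symmetry $H_+=\theta(H_-)$ of the bulk conjugations, the whole Trotter product transforms under $\FS$ into a composition of positive operators acting on one side of the reflection plane paired with their $\theta$-reflected copies on the other---that is, into a genuine ``square'' $\sum_\alpha R_\alpha\,\theta(R_\alpha)$, whose diagonal matrix elements against $V\otimes\theta(W)$ are non-negative. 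Establishing this Fourier-of-a-product law and the compatibility of $\theta$ with $\FS$ is the technical heart, and it is precisely the computation supplied in \cite{JL-LW}.

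Finally, since the cone of non-negative reals is closed under the limit $n\to\infty$, positivity of each finite Trotter approximant passes to the limit, yielding
\[
\lra{\theta(V)\otimes V,\,e^{-\beta H}\,\theta(W)\otimes W}\geq 0
\]
for all $V,W\in\Hil$ and all $\beta\geq 0$, which is the RP property for $H$.
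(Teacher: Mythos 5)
The paper states this theorem without proof --- it appears immediately after the sentence ``One can carry the picture proof further'' --- so there is no argument in the source to compare against line by line; what you have written is a reconstruction of the argument being alluded to, and in outline it is the right one. Your three ingredients --- (i) the bulk terms act on complementary tensor legs and satisfy $e^{-tH_{-}}\theta(W)=\theta(e^{-tH_{+}}W)$, so the bulk flow preserves vectors of the form $\theta(W)\otimes W$; (ii) a Trotter splitting to separate the bulk from the coupling; (iii) the observation that every Trotter factor lies in a multiplicatively closed cone of ``squares'' $\sum_{\alpha}\theta(R_{\alpha})\otimes R_{\alpha}$, on which the RP functional evaluates to $\sum_{\alpha}|\lra{V,R_{\alpha}W}|^{2}\geq 0$, followed by passage to the limit --- are exactly the Fr\"ohlich--Israel--Lieb--Simon mechanism that this pictorial theorem encodes.

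The one place where your write-up asserts rather than proves is the step you yourself call the ``technical heart'': why $e^{-\beta H_{0}/n}$ belongs to that cone. This does not require a general ``Fourier-of-a-product law''; it needs only two concrete facts. First, $\FS(-H_{0})\geq 0$ says the $90^{\circ}$-rotated picture is a positive operator, hence a sum of rank-one positives; rotating each summand back across the reflection plane exhibits $-H_{0}=\sum_{\alpha}\theta(C_{\alpha})\otimes C_{\alpha}$ (the same manipulation as \eqref{RPbyPics}, read for $-H_{0}$ in place of $e^{-\beta H}$). Second, because $\theta(C_{\alpha})$ and $C_{\beta}$ live on different tensor factors, powers of such a sum are again of the same form, so the exponential power series places $e^{-\beta H_{0}/n}$ in the cone; this closure under multiplication is precisely what the paper's Quantum Schur Product Theorem expresses in convolution language. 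With those two observations made explicit your proof closes; as written, deferring them wholesale to \cite{JL-LW} leaves the decisive step outside the argument. Finally, a small labelling slip: since $H_{+}=\theta(H_{-})\in\hom(\Hil)$ acts on the second tensor factor, the absorbed vector is $W_{t}=e^{-tH_{+}}W$ rather than $e^{-tH_{-}}W$, and your opening sentence swaps which factor each of $H_{\pm}$ inhabits.
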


We also have used a generalization of this method to establish the RP property in Levin-Wen models~\cite{JL-LW}, now a fashionable model to describe topological insulators.

\section{The Quon Language and Virtual States}
The three dimensional quon language can work even better than the two-dimensional pictures discussed above, and this brings to the fore the concept of virtual states. The 2D picture language could be considered as the local boundary theory of the 3D quon language. For a single-particle Hilbert space we take two copies of the two-string language in \S\ref{Sect:TwoStrings}. Then one can represent the Pauli matrices for a single spin by neutral transformations:
\be\label{4Pauli}
\scalebox{.7}{
I=
\raisebox{-.4cm}{
\tikz{
\draw (0,0)--(0,1);
\draw (2/3,0)--(2/3,1);
\draw (4/3,0)--(4/3,1);
\draw (2,0)--(2,1);
}}\;,\qquad
X=
\raisebox{-.4cm}{
\tikz{
\draw (0,0)--(0,1);
\draw (2/3,0)--(2/3,1);
\draw (4/3,0)--(4/3,1);
\draw (2,0)--(2,1);
\node at (-.5/3,1/2) {\scriptsize 1};
\node at (1.8,1/2) {\scriptsize -1};
}}\;,\qquad
Y=
\raisebox{-.4cm}{
\tikz{
\draw (0,0)--(0,1);
\draw (2/3,0)--(2/3,1);
\draw (4/3,0)--(4/3,1);
\draw (2,0)--(2,1);
\node at (-.5/3,1/2) {\scriptsize -1};
\node at (3.5/3,1/2) {\scriptsize 1};
}}\;,\qquad
Z=
\raisebox{-.4cm}{
\tikz{
\draw (0,0)--(0,1);
\draw (2/3,0)--(2/3,1);
\draw (4/3,0)--(4/3,1);
\draw (2,0)--(2,1);
\node at (-.5/3,1/2) {\scriptsize 1};
\node at (1.5/3,1/2) {\scriptsize -1};
}}}\;,
\ee
where neutral means the total change is zero in $\mathbb{Z}_d$.
Each of these transformations is unitary; each equals the identity when raised to the power~$d$. 
Also each of these Pauli matrices  is neutral, and each commutes with  the  grading transformation 
\be
\scalebox{.7}{
$\gamma$ = 
\raisebox{-.4cm}{
\tikz{
\draw (0,0)--(0,1);
\draw (2/3,0)--(2/3,1);
\draw (4/3,0)--(4/3,1);
\draw (2,0)--(2,1);
\node at (-.5/3,1/2) {\scriptsize 1};
\node at (1.5/3,1/2) {\scriptsize -1};
\node at (.5+2/3,1/2) {\scriptsize 1};
\node at (.5+4/3,1/2) {\scriptsize -1};
}}}=\zeta^{-1}  XYZ\;,
\ee
where $\zeta$ is given in \eqref{zeta}.
Therefore each Pauli matrix acts on an eigenspace of $\gamma$.  On the neutral eigenspace for which  $\gamma=1$, these matrices satisfy the desired relation $XYZ=\zeta$.  We call the $\gamma=1$ space, the $d$-dimensional \emph{physical} or \emph{real} subspace of the $d^{2}$-dimensional vector space of eigenstates of $Z$. 

This description of a single particle excitation as a neutral pair led to our realization that particles are best described by pictures in one higher dimension. A three-particle state is given by a cap enclosing the four output lines with internal connections denoting vectors $v_{1}, v_{2}, v_{3}$,
\be\label{3Particles}
\raisebox{-1.cm}{
\begin{tikzpicture}
\begin{scope}[xscale=.5,yscale=.25]
\foreach \x in {0,6,12}{
\draw[blue] (\x,12) arc (-180:0:2.5);
\draw[blue] (\x,12) arc (180:0:2.5 and 6);
\draw[blue,dashed] (\x,12) arc (180:0:2.5);
\draw[dashed] (\x+.75,13) rectangle (\x+4.25,15);
\foreach \y in {1,2,3,4}
{
\draw (\x+\y,12)--++(0,1);
}
}
\node at (0+2.5,14) {$v_{1}$};
\node at (6+2.5,14) {$v_{2}$};
\node at (12+2.5,14) {$v_{3}$};
\end{scope}
\end{tikzpicture}}\;.
\ee
A transformation can be represented by a 3D-TQFT, 
\be
\raisebox{-2cm}{
\begin{tikzpicture}
\begin{scope}[xscale=.5,yscale=.25]
\draw[dashed] (.75,4) rectangle (16.25,8);
\node at (8.5,6) {$T$};
\foreach \x in {1,2,3,4,7,8,9,10,13,14,15,16} {
\draw (\x,0)--++(0,4);
\draw (\x,8)--++(0,4);
}

\foreach \x in {0.5,6.5,12.5}{
\draw[white,WL]  (\x,12) arc (-180:180:2);
\draw[blue] (\x,12) arc (-180:180:2);
\draw[blue] (\x,0) arc (-180:0:2);
\draw[blue,dashed] (\x,0) arc (180:0:2);
}
\draw[blue] (0.5,0)--++(0,12);
\draw[blue] (16.5,0)--++(0,12);
\draw[blue] (4.5,12)--++(0,-2) arc (-180:0:1)--++(0,2);
\draw[blue] (10.5,12)--++(0,-2) arc (-180:0:1)--++(0,2);
\draw[blue] (4.5,0)--++(0,2) arc (180:0:1)--++(0,-2);
\draw[blue] (10.5,0)--++(0,2) arc (180:0:1)--++(0,-2);
\end{scope}
\end{tikzpicture}}\;.
\ee
In fact it is more symmetric to arrange the output lines on the corners of a square, with an arrow indicating the numbering of the strings.  For clarity we suppress the three-dimensional TQFT, except for the outline (here a circle) of a cross-section. Looking down on the three-particle outputs, and suppressing the output lines, one can represent the three-particle state in \eqref{3Particles} as
\[
\raisebox{-.8cm}{\begin{tikzpicture}
\begin{scope}[scale=.22]
\foreach \x in {0,1}{
\foreach \y in {0,1}{
\foreach \u in {0}{
\foreach \v in {1,2,3}{
\coordinate (A\u\v\x\y) at (\x+1.5*\u,\y+3*\v);
}}}}
\begin{scope}[shift={(24,6)},rotate=90]
\draw[blue] (.5,6.5) circle (4);
\foreach \x in {0,1}{
\foreach \y in {0,1}{
\foreach \u in {0}{
\foreach \v in {1,2,3}{
\coordinate (A\u\v\x\y) at (\x+1.5*\u,\y+3*\v);
}}}}
\foreach \u in {0}{
\foreach \v in {1,2,3}{
\fill[blue!20] (A\u\v00) rectangle (A\u\v11);
\draw (A\u\v00) rectangle (A\u\v11);
\draw[->] (A\u\v01)--++(.5,0);
}}
\end{scope}
\end{scope}
\end{tikzpicture}}\;.
\]

By defining the internal connections in two different ways, one obtains the three-particle Greenberger-Horne-Zeilinger resource state $|\text{GHZ}\rangle_{3}$~\cite{GHZ}, or its SFT the state $|{\text{Max}}\rangle_{3}$~\cite{JLW-1}, as 
\begin{equation}\label{GHZ to MAX}
\scalebox{1}{
\raisebox{-.8cm}{
\begin{tikzpicture}
\node at (-1.85,1.5) {$|\text{GHZ}\rangle_{3}=$};
\node at (3.15,1.5) {$|\text{Max}\rangle_{3}=$};
\begin{scope}[scale=.22]
\draw[blue] (.5,6.5) circle (4);
\foreach \x in {0,1}{
\foreach \y in {0,1}{
\foreach \u in {0}{
\foreach \v in {1,2,3}{
\coordinate (A\u\v\x\y) at (\x+1.5*\u,\y+3*\v);
}}}}

\foreach \u in {0}{
\foreach \v in {1,2,3}{
\fill[blue!20] (A\u\v00) rectangle (A\u\v11);
\draw (A\u\v00) rectangle (A\u\v11);
\draw[->] (A\u\v00)--++(0,.5);
}}

\draw (A0101) to [bend left=30] (A0200);
\draw (A0201) to [bend left=30] (A0300);
\draw (A0301) to [bend left=-30] (A0100);

\draw (A0111) to [bend left=-30] (A0210);
\draw (A0211) to [bend left=-30] (A0310);
\draw (A0311) to [bend left=30] (A0110);

\begin{scope}[shift={(30,6)},rotate=90]
\draw[blue] (.5,6.5) circle (4);
\foreach \x in {0,1}{
\foreach \y in {0,1}{
\foreach \u in {0}{
\foreach \v in {1,2,3}{
\coordinate (A\u\v\x\y) at (\x+1.5*\u,\y+3*\v);
}}}}

\foreach \u in {0}{
\foreach \v in {1,2,3}{
\fill[blue!20] (A\u\v00) rectangle (A\u\v11);
\draw (A\u\v00) rectangle (A\u\v11);
\draw[->] (A\u\v01)--++(.5,0);
}}

\draw (A0101) to [bend left=30] (A0200);
\draw (A0201) to [bend left=30] (A0300);
\draw (A0301) to [bend left=-30] (A0100);

\draw (A0111) to [bend left=-30] (A0210);
\draw (A0211) to [bend left=-30] (A0310);
\draw (A0311) to [bend left=30] (A0110);
\end{scope}
\end{scope}
\end{tikzpicture}}
}\;.
\end{equation}
This language illustrates these two states as related by rotation, given by the pictorial Fourier transform~$\FS$. For further details, see~\cite{JLW-1,JLW-Quon}. Both GHZ and Max are  maximally entangled resource vector states~\cite{GHZ,JLW-1}, so the string Fourier transform produces maximal entanglement in one operation.

ZL applied the quon language and its virtual concepts to design graphical quantum error-correcting codes in \cite{Liu-QECC}, whose logical qubits can be implemented as the ground states of translation-invariant, gapped, local Hamiltonians on quasi two-dimensional exactly solvable models. 
These codes can be regarded as surface codes with ``point defects.''

\section{Quantum Fourier Analysis}
One  theme that runs through much of the work in the picture language program comes from analytic properties of the pictorial Fourier transform~$\FS$. 
ZL with Chunlan Jiang and Jinsong Wu began to analyze analytic properties of the SFT. They found that standard inequalities such as the Young and Hausdorff-Young estimates, as well as classical uncertainly principles such as those of Heisenberg, Hirschmann or Beckner, have generalizations in which the analytic Fourier transform $F$ is replaced by the pictorial Fourier transform $\FS$~\cite{JiLWu-1, JiLWu-4, JiLWu-3}.  The study of the analytic properties of $\FS$ has become known as {\em quantum Fourier analysis}, see \cite{QFA} for an overview. This is a new subject, with only a few of many potentially interesting things already discovered; it has become a central part of our picture language project.

The pictorial proof of inequalities allowed them to determine the extremizers for the inequalities, namely to solve the inverse problem. Now these methods have been extended to analysis on quantum groups and other frameworks, and this leads to an entire realm of analytical questions that should be investigated.  

For example, one has shown that with the trace norm, defined by pairing input and output strings in a non-overlapping fashion,
\be
\text{Tr}(A^{*}B)=
\raisebox{-1.3cm}{\scalebox{.6}{
\begin{tikzpicture}
\draw (1.8,-0.3) --(1.8,0);
\coordinate (A) at (1.8,-0.3);
\coordinate (B) at (2,-0.6);
\draw[color=black] (A) to [bend left=-15] (B);
\draw (2.4,-0.2) --(2.4,0);
\coordinate (A) at (2.4,-0.2);
\coordinate (B) at (2.6,-0.5);
\draw[color=black] (A) to [bend left=-20] (B);
\node (0,0.65) at (2.1,-0.2) {$\cdots$};
\draw (1.7,0) rectangle +(0.8,0.6);
\node (0,0.65) at (2.15,0.25) {$A^{\ast}$};
\draw (1.8,0.6) --(1.8,0.9);
\draw (2.4,0.6) --(2.4,0.9);
\node (0,0.65) at (2.1,0.7) {$\cdots$};
\draw (1.7,0.9) rectangle +(0.8,0.6);
\node (0,0.65) at (2.1,1.2) {$B$};
\draw (1.8,1.5) --(1.8,1.8);
\coordinate (A) at (1.8,1.8);
\coordinate (B) at (2,2.35);
\draw[color=black] (A) to [bend left=15] (B);
\draw (2.4,1.5) --(2.4,1.7);
\coordinate (A) at (2.4,1.7);
\coordinate (B) at (2.6,2.06);
\draw[color=black] (A) to [bend left=20] (B);
\node (0,0.65) at (2.1,1.7) {$\cdots$};
\node (0,0.65) at (5.5,0.7) {$\cdots$};
\draw (2.6,-0.5) arc (-127:127:1.6);
\draw (2,-0.6) arc (-138:138:2.2);
\end{tikzpicture}
}}\;.
\ee
Assuming that the trace is positive, the GNS representation allows one to define the corresponding norms $\Vert A \Vert_{p} =\lrp{ \text{Tr}((A^{*}A)^{p/2})}^{1/p}$.  One finds inequalities that generalize inequalities of classical Fourier analysis.  This is what we call {\em quantum Fourier analysis}.  We state here only a few examples. We refer the readers to \cite{QFA} and references therein for other examples and results.

\begin{theorem}[\bf Quantum Hausdorff-Young Inequality~\cite{JiLWu-1}]
Let $0\neq A$ be a two-box in an irreducible, subfactor planar algebra with quantum dimension  $\delta$. Then
\begin{equation}\label{eq:hysubfactor}
\|\FS(A)\|_q\leq\delta^{1-\frac{2}{p}}\|A\|_p\;, \quad 1\leq p\leq 2,\quad  p^{-1}+q^{-1}=1\;.
\end{equation}
The equality holds in  \eqref{eq:hysubfactor} for some $1<p<2$, iff the equality holds for all $1\leq p\leq 2$, iff $A$ is a  bi-shift of a biprojection.
\end{theorem}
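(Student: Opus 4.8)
The plan is to follow the classical route to Hausdorff--Young---interpolation between two endpoint estimates---and then to extract the extremizers by tracking the rigidity of that interpolation. First I would fix the two relevant trace structures on the space of two-boxes: the trace defining $\|A\|_p$ on the input side, and the rotated trace defining $\|\FS(A)\|_q$ on the output side. The two endpoints are then the Plancherel identity at $p=2$, where $\delta^{1-2/p}=1$ and $\FS$ is a (suitably normalized) isometry for the $2$-norm, so that $\|\FS(A)\|_2=\|A\|_2$ with equality always; and the $L^1\to L^\infty$ bound at $p=1$, namely $\|\FS(A)\|_\infty\leq \delta^{-1}\|A\|_1$, which I would read off directly from the planar tangle defining $\FS$ by bounding the operator norm of the closed-up diagram by $\delta^{-1}$ times the total trace-weight of $A$.

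With the endpoints in hand, the inequality for $1\leq p\leq 2$ follows from the noncommutative Riesz--Thorin theorem for the $L^p$-spaces attached to these two traces. Choosing the interpolation parameter $\theta$ with $\tfrac{1}{p}=(1-\theta)+\tfrac{\theta}{2}$, i.e.\ $\theta=2-\tfrac{2}{p}$, the constant interpolates as $(\delta^{-1})^{1-\theta}\cdot 1^{\theta}=\delta^{\theta-1}=\delta^{1-2/p}$, which is exactly \eqref{eq:hysubfactor}. This is the routine part of the argument.

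For the equality statement I would first exploit the rigidity built into complex interpolation. Encoding the estimate as a three-lines bound $|\Phi(z)|\leq M_0^{1-z}M_1^z$ for an appropriate analytic functional $\Phi$ on the strip, the Hadamard three-lines rigidity forces equality at a single interior $1<p<2$ to propagate: $\Phi(z)$ must coincide with $M_0^{1-z}M_1^z$ up to a unimodular factor on the whole strip, whence equality holds for every $1\leq p\leq 2$, and in particular at the endpoint $p=1$. It then remains to characterize the operators saturating $\|\FS(A)\|_\infty=\delta^{-1}\|A\|_1$. Here I would pass to range/support projections and argue that saturating the operator-norm-versus-trace-norm inequality pins $A$ down, up to phase and up to the two one-sided shift operations, to the normalized support projection of a biprojection; the converse---that every bi-shift of a biprojection attains equality for all $p$---is a direct computation using the defining relation of a biprojection under $\FS$ together with Plancherel.

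The main obstacle will be the forward direction of the extremizer characterization, that endpoint equality forces the bi-shift-of-biprojection structure. This is where the soft interpolation machinery is insufficient and the genuinely planar-algebraic input must enter: one invokes the structure theory of biprojections (intermediate subfactors, Bisch--Jones) to recognize that the projections whose Fourier transform is again a multiple of a projection are precisely the biprojections, with the shift freedom accounting for the group-like translations and modulations that already appear as the extremizers in the finite-group case. Controlling the saturation of the underlying H\"older-type inequality at the level of projections is the delicate step; everything else is either a standard interpolation argument or a bookkeeping computation.
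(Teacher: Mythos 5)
The survey states this theorem without proof, attributing it to \cite{JiLWu-1}, so there is no in-paper argument to compare against; measured against the proof in that reference, your outline follows essentially the same route: Plancherel at $p=2$, the endpoint bound $\|\FS(A)\|_\infty\le\delta^{-1}\|A\|_1$, noncommutative Riesz--Thorin (Kosaki) interpolation to obtain \eqref{eq:hysubfactor}, and a rigidity-plus-structure analysis for the extremizers. Your constant bookkeeping ($\theta=2-\tfrac{2}{p}$, giving $\delta^{\theta-1}=\delta^{1-2/p}$) is correct. The caveat is that the two steps you defer are exactly where the content of the cited proof lives. First, the $L^1\to L^\infty$ endpoint is not something one simply ``reads off'' from the tangle defining $\FS$: with the Markov trace it amounts to the bilinear estimate $|\mathrm{Tr}(\FS(x)\,y)|\le\delta^{-1}\|x\|_1\|y\|_1$, which in \cite{JiLWu-1} rests on a genuine positivity argument in the planar algebra, not on an inspection of the diagram. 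Second, the three-lines rigidity you invoke only shows that the boundary values of the particular analytic family built from the polar decomposition of $A$ are endpoint extremizers; translating that back into a statement about $A$ itself, and then proving that an extremizer of the endpoint inequality must be a bi-shift of a biprojection, occupies the bulk of the argument in the reference (it proceeds through the extremizer theory for the quantum Young inequality and the exchange-relation characterization of biprojections, not merely through the definition of a biprojection as a projection whose transform is a multiple of a projection). In short: correct skeleton and correct constants, matching the known proof, but the two planar-algebraic pillars are asserted rather than established.
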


Not only does this yield the optimal constant in the quantum inequality, but its extremizers are characterized by interesting algebraic properties.
This has been applied by ZL in the classification of subfactors in \cite{Liuex}. This gave rise to a quantum analogue of the fundamental theorem of finite abelian groups:  
An irreducible subfactor planar algebra generated by 2-boxes is a free product of finite abelian groups and Temperley-Lieb-Jones planar algebras, if and only if its 2-boxes are commutative, and its 3-boxes (modulo the basic construction ideal) are commutative.

\begin{theorem}[\bf Quantum Schur Product Theorem~\cite{Liuex}]
Let $A, B>0$ be two-box positive operators in a subfactor planar algebra. Then their convolution $A*B>0$. 
\end{theorem}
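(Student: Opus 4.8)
The plan is to prove positivity of the convolution $A*B$ by identifying convolution with the string Fourier transform $\FS$, and then exploiting the fact that $\FS$ intertwines the ordinary (vertical) multiplication of operators with their convolution. The key structural observation is that convolution of two-boxes can be written as
\[
A*B = \FS^{-1}\bigl(\FS(A)\,\FS(B)\bigr)\;,
\]
that is, $\FS$ turns the multiplication $\FS(A)\,\FS(B)$ into the convolution $A*B$ (up to a normalizing power of $\delta$). This is the pictorial analogue of the classical statement that the Fourier transform carries pointwise product to convolution, and it is exactly the kind of identity that the calculation in the proof of Theorem~\ref{Thm:SFT-F} establishes in the diagonal case. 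So first I would record this intertwining relation as the backbone of the argument.

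Granting the intertwining identity, the proof reduces to two facts about $\FS$ on the two-box space. First, $\FS$ must send positive operators to positive operators: since the trace is assumed positive and $\FS$ is (up to scalar) implemented by a $90^\circ$ rotation of the picture, which one checks is a $*$-preserving map on the two-box algebra, it carries the cone of positive elements into itself. Thus from $A,B>0$ one obtains $\FS(A),\FS(B)>0$ on the appropriate side. Second, and this is the crux, the product $\FS(A)\,\FS(B)$ of two \emph{commuting} positive operators is again positive; commutativity is what lets one write the product as $\FS(A)^{1/2}\,\FS(B)\,\FS(A)^{1/2}\geq0$. Here is where the subfactor hypothesis enters: in a subfactor planar algebra the two-box space is a commutative algebra under the relevant (Fourier-dual) multiplication, so $\FS(A)$ and $\FS(B)$ commute and their product stays in the positive cone. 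Applying $\FS^{-1}$, which again preserves positivity for the same reason as $\FS$, then yields $A*B>0$.

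The main obstacle I anticipate is not the positivity bookkeeping but the precise diagrammatic justification of the intertwining identity $A*B=\FS^{-1}(\FS(A)\FS(B))$ in the two-box setting, together with pinning down the exact normalization (the powers of $\delta=\sqrt{d}$) so that the identity holds on the nose rather than up to an unknown scalar. One must be careful that convolution is defined by a specific non-overlapping pairing of input and output strings — the same pairing used to define the trace in the quantum Hausdorff-Young discussion above — and that the rotation defining $\FS$ matches this pairing after one resolves the caps and cups. A second, subtler point is verifying that $\FS$ is genuinely positivity-preserving rather than merely $*$-preserving; establishing this rigorously may require expressing $\FS(A)$ as a completely positive map or invoking the GNS construction attached to the positive trace, so that the positive cone is characterized intrinsically and its invariance under rotation becomes transparent. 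Once these two diagrammatic identities — the intertwining relation and the positivity-preservation of $\FS$ — are in place, the theorem follows from the elementary algebraic fact about products of commuting positive operators.
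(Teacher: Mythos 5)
Your argument does not work, and the two claims it rests on are both false in general. First, the string Fourier transform $\FS$ is \emph{not} a positivity-preserving map on the two-box space: it is a unitary for the trace inner product (a $90^{\circ}$ rotation), but it does not carry the positive cone $\{x^{*}x\}$ into itself. In the prototype example of a group subfactor, $\FS$ is the Fourier transform on a finite group, which exchanges the cone of pointwise-nonnegative functions with the cone of positive-\emph{definite} functions; these are genuinely different cones. Indeed, the entire content of the Quantum Schur Product Theorem is that the positive cone is stable under convolution, i.e.\ that the image cone $\FS(\{x^{*}x\})$ is closed under multiplication --- so deducing this from ``$\FS$ preserves positivity'' is circular, and the premise is simply wrong. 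Second, the two-box space of a subfactor planar algebra is not commutative in general, under either the vertical multiplication or its Fourier dual; the paper itself records that commutativity of the 2-boxes is a strong special property, characterizing free products of finite abelian groups and Temperley--Lieb--Jones planar algebras. So you cannot write $\FS(A)\FS(B)=\FS(A)^{1/2}\FS(B)\FS(A)^{1/2}$. (As a sanity check, your strategy applied to the classical Schur product theorem would assert that the Fourier transform preserves entrywise positivity of matrices and that matrix multiplication is commutative --- both false.)

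The actual proof, from the cited reference, is a one-picture argument that parallels the classical rank-one proof of Schur's theorem rather than any Fourier intertwining. Since $A,B\geq 0$ one writes $A=\sum_{i}x_{i}^{*}x_{i}$ and $B=\sum_{j}y_{j}^{*}y_{j}$, inserts these decompositions into the convolution diagram, and observes that a planar isotopy rearranges each term into the form $w_{ij}^{*}w_{ij}$ for an explicit picture $w_{ij}$ built from $x_{i}$ and $y_{j}$; positivity of the partition function (the Hilbert space structure on pictures) then gives $A*B=\sum_{i,j}w_{ij}^{*}w_{ij}\geq 0$. Passing from $\geq 0$ to the strict positivity $A*B>0$ asserted in the statement requires an additional support argument (the range of $A*B$ is controlled by the ranges of $A$ and $B$, which are full when $A,B>0$); your proposal does not address this step at all. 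I would redo the proof along these diagrammatic lines.
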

This theorem is a quantum generalization of the classical Schur product theorem that the Hadamard product of positive matrices is positive. 
Applying this result to the Drinfeld center of a unitary fusion category $\C$, we obtain the Schur product theorem on the Fourier dual of the Grothendieck ring $R$ of $\C$, namely irreducible representations of $R$. However, the Schur product theorem does not hold on a fusion ring in general. Therefore it provides an analytic obstruction to unitary categorification of a fusion ring; this obstruction turns out to be surprisingly efficient, as shown in \cite{LPW}. Other quantum inequalities have been investigated as analytic obstructions in \cite{LPW}, including Young's inequality and sum-set estimates.

There is a rich spectrum of analytic inequalities for quantum symmetries.  
There are also uncertainty principles that arise in case for certain $q,p$ one has equality, such as in the inequality \eqref{eq:hysubfactor} which holds as an identity for $p=q=2$.  A corresponding quantum Young's inequality also holds, with convolution defined by horizontal, rather than vertical, multiplication.  

Another interesting result that relates to quantum information is an entropic uncertainty principle for quantum entanglement.  There are several new inequalities involving Fourier duality, entropy, or  related uncertainty principles. We give an example here, that gained insights from picture language:
 The R\'enyi entropy of order $p$ for a two-box $A$ in a subfactor planar algebra is 
 	\be
	h_{p}(A)= \frac{p}{1-p} \,\log \norm{A}_{p}\;,
	\quad\text{for}\quad 0<p\neq1\;.
	\ee 
This tends to the von Neumann entropy as $p\searrow1+$:
\be
h(A)=Tr(-|A| \log |A|).
\ee

\begin{theorem}[\bf Quantum Entropic Uncertainty Principle~\cite{JiLWu-1}]
Let $A$ be a two-box in an irreducible, subfactor planar algebra with quantum dimension  $\delta$. Then
\be
h(|A|^{2}) + h(|\FS(A)|^{2}) \geq 2 \|A\|_2^2 \log \delta  -2 \|A\|_2^2 \log \|A\|_2^2\;.
\ee
\end{theorem}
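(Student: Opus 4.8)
The plan is to deduce the entropic uncertainty principle from the Quantum Hausdorff--Young Inequality by the Hirschman--Beckner differentiation trick, specialized to the self-dual exponent $p=q=2$. The key observation is that the prefactor $\delta^{1-2/p}$ in \eqref{eq:hysubfactor} equals $1$ at $p=2$, so the inequality degenerates there into the Plancherel identity $\|\FS(A)\|_2=\|A\|_2$ (the $p=q=2$ equality case noted after the Hausdorff--Young theorem). Writing $1/p=\tfrac12+s$ and $1/q=\tfrac12-s$ for $s\in[0,\tfrac12]$, so that $1-2/p=-2s$, I introduce the auxiliary function
\[
G(s)=-2s\log\delta+\log\|A\|_p-\log\|\FS(A)\|_q\;.
\]
By \eqref{eq:hysubfactor} we have $G(s)\geq0$ on $[0,\tfrac12]$, while Plancherel gives $G(0)=0$. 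Since $G$ attains its minimum at the left endpoint $s=0$, its one-sided derivative must satisfy $G'(0^{+})\geq0$, and the whole theorem will follow by computing this derivative.

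First I would establish the analytic heart of the argument: differentiating a (noncommutative) $L^p$-norm in its exponent produces a von Neumann entropy. Parametrizing by $r=1/p$ and using the spectral decomposition of the positive operator $|A|$ against the positive GNS trace, a direct calculation yields
\[
\frac{d}{dr}\log\|A\|_p\Big|_{r=1/2}=\log\|A\|_2^2+\frac{h(|A|^2)}{\|A\|_2^2}\;,
\]
and the analogous identity for $\FS(A)$ with $r=1/q$. Here I use that $h(|A|^2)=-\operatorname{Tr}(|A|^2\log|A|^2)$ is precisely the von Neumann entropy obtained as the $p\searrow1^{+}$ limit of the R\'enyi entropies $h_p$.

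Next I would assemble $G'(0^{+})$ from the two derivatives, tracking the sign reversal coming from $1/q=\tfrac12-s$ and invoking Plancherel to replace $\|\FS(A)\|_2^2$ by $\|A\|_2^2$, which I write as $N=\|A\|_2^2$. This produces
\[
G'(0^{+})=-2\log\delta+2\log N+\frac{h(|A|^2)+h(|\FS(A)|^2)}{N}\geq0\;,
\]
and multiplying through by $N=\|A\|_2^2$ gives exactly the claimed bound
\[
h(|A|^2)+h(|\FS(A)|^2)\geq 2\|A\|_2^2\log\delta-2\|A\|_2^2\log\|A\|_2^2\;.
\]

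The main obstacle is the rigorous justification of the entropy--derivative identity in the subfactor setting: one must show that $p\mapsto\|A\|_p$ is differentiable at $p=2$ and that its derivative is given by the von Neumann entropy of $|A|^2$, working with the trace rather than with a finite sum over eigenvalues. Because the trace is finite and $A\neq0$ is a fixed two-box, $|A|$ is bounded and functional calculus legitimizes interchanging the derivative with the trace, so the commutative computation transfers faithfully. The remaining subtlety is the one-sided nature of the derivative at the endpoint $p=2$, which is exactly the feature that makes an \emph{inequality}, rather than an equality, emerge.
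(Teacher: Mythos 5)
Your argument is correct, and the computation checks out: with $1/p=\tfrac12+s$ one has $1-2/p=-2s$, the Hausdorff--Young inequality gives $G(s)\geq 0$, Plancherel (the $p=q=2$ identity case noted in the paper) gives $G(0)=0$, and the derivative formula $\frac{d}{dr}\log\|A\|_{1/r}\big|_{r=1/2}=\log\|A\|_2^2+h(|A|^2)/\|A\|_2^2$ follows from $\log\|A\|_p=r\log\operatorname{Tr}(|A|^{1/r})$ by a routine functional-calculus differentiation (legitimate here since the trace is finite and positive and $|A|$ is bounded, with the convention $0\log 0=0$). Assembling the two derivatives with the sign flip from $1/q=\tfrac12-s$ and multiplying by $N=\|A\|_2^2$ reproduces the stated bound exactly. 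Note, however, that the paper itself contains no proof of this theorem: it is quoted from the reference on noncommutative uncertainty principles, and the surrounding text indicates that it arises there as the $p\searrow 1^{+}$ limit of quantum R\'enyi entropic uncertainty principles. Your route --- the Hirschman--Beckner endpoint-differentiation of the Quantum Hausdorff--Young inequality --- is the classical one and is essentially what the cited original carries out; the R\'enyi-limit route alluded to in the paper instead first proves a family of inequalities $h_p+h_q\geq\cdots$ and then passes to the limit, which yields strictly more information (the whole R\'enyi family) at the cost of a slightly longer argument. The only points worth making fully explicit in a write-up are (i) one-sided differentiability of $s\mapsto\log\|A\|_{p(s)}$ at $s=0$ (immediate in this finite-dimensional setting), and (ii) that only $\liminf_{s\to0^+}G(s)/s\geq 0$ is actually needed, so even differentiability is not essential.
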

\noindent
Quantum R\'enyi entropic uncertainty principles were proved in ~\cite{Entropy}, and the above result appeared as a special limit.

The von-Neumann entropy could be interpreted as an entanglement entropy between the domain and the range of an operator. The above result could be transformed into an uncertainty principle for entanglement entropy.

For example, the zero state and the Bell state are Fourier dual to each other. The above uncertainty principle yields a lower bound for the sum of their entanglement entropy.  As the zero state is a product state with no entanglement, the Bell state achieves  maximal entanglement.
Using the entropic uncertainty principle of the general SFT on n-boxes~\cite{JiLWu-1}, similar arguments apply to other states which are Fourier dual to the zero state, such as the GHZ or Max state mentioned above. From this point of view, we believe that the Fourier duality produces a {\it minimal-maximal entanglement pair}.

Following the pictorial intuition, ZL, Jiang and Wu introduced block maps for subfactors and proved that any 2-box converges to a multiple of a biprojection under the dynamic action of the block map \cite{JLWscm}. This can be considered as a quantum analogue of the central limit theorem, with biprojections playing the role of Gaussians. The block map could also be considered as a 2D renormalization flow from the physics point of view.

\section{Algebraic Identities}
Many other results have already come from the analysis of these pictures.  Some of the greatest insights seem to be associated with the three-dimensional quon language~\cite{JLW-Quon}.  This appears to be a new and very interesting field.  
ZL has used the quon language to establish a duality relation  leading to the proof of an interesting identity for $6j$-symbol self-duality.  This identity had been discovered for quantum $SU(2)$ in the context of the study of quantum gravity\cite{Barrett}.  But it was only conjectured in more general form~\cite{Shakirov}.  Here the $S$ matrix plays the role of $\FS$ \cite{LiuXu}.
\begin{theorem}[See \S6 of Liu~\cite{Quon-Surface}]
With $\overline{X}$ denoting the dual object to $X$ in a modular tensor category, 
\be\label{6jRelation}
\left|{{{X_{6}~X_{5}~X_{4}}\choose{\overline{X_{3}}~\overline{X_{2}}~\overline{X_{1}}}}}\right|^{2}
= \sum_{\vec {Y}} \left(\prod_{k=1}^{6}S_{X_{k}}^{Y_{k}} \right)
\left|{{{Y_{1}Y_{2}Y_{3}}\choose{Y_{4}Y_{5}Y_{6}}}}\right|^{2}\;.
\ee
 \end{theorem}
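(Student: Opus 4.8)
The plan is to prove the identity as a \emph{pictorial self-duality} statement. Reading $f(\vec{X}) := \left|{{{X_{6}~X_{5}~X_{4}}\choose{\overline{X_{3}}~\overline{X_{2}}~\overline{X_{1}}}}}\right|^{2}$ as a function of the six edge labels, the assertion \eqref{6jRelation} says precisely that $f$ is fixed by the sixfold tensor power of the $S$-matrix, $f = S^{\otimes 6} f$. Since in a modular tensor category the $S$-matrix is the incarnation of the string Fourier transform $\FS$, acting on a single strand as a $90^{\circ}$ rotation, I would realize $f$ as a closed labelled graph in $S^{3}$ and then apply $\FS$ edge by edge.

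First I would draw the tetrahedral symbol $T(\vec X)$ as a four-valent spin network and form $|T|^{2}=T\,\overline{T}$ by gluing $T$ to its mirror image along all four triangular faces; in the three-dimensional quon language this doubled configuration is a closed graph in $S^{3}$ whose only surviving free data are the six edge labels $X_{1},\dots,X_{6}$. The product $\prod_{k}S_{X_k}^{Y_k}$ on the right is produced one factor at a time: up to the global normalization, $S_{X}^{Y}$ is the Hopf-link evaluation of $X$ with $Y$, so the elementary Fourier move replaces a single $X_{k}$-strand by the weighted sum $\sum_{Y_{k}} S_{X_{k}}^{Y_{k}}(\,\cdot\,)$ in which $Y_{k}$ appears as a meridian encircling the dual strand. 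This is exactly the single-strand action of $\FS$ described above.

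Next I would apply this elementary move simultaneously to all six edges and then use the three-dimensional isotopy freedom of the quon language to slide the six new $Y_{k}$-meridians into position. Because the tetrahedron is self-dual under the operation that exchanges edges with the dual edges meeting the opposite vertex, the transformed closed graph is isotopic in $S^{3}$ to the doubled tetrahedron carrying the labels $\vec Y$, that is, to $|T(\vec Y)|^{2}$; collecting the six weights yields $\sum_{\vec Y}\bigl(\prod_{k}S_{X_{k}}^{Y_{k}}\bigr)\,|T(\vec Y)|^{2}$, the right-hand side of \eqref{6jRelation}. The duals $\overline{X_{1}},\overline{X_{2}},\overline{X_{3}}$ in the bottom row are what let the two rows transform into one another under this move, so I would fix orientation conventions at the outset to keep the pairing $X_{k}\leftrightarrow Y_{k}$ consistent.

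The hard part will be the bookkeeping of framing, twist, and Frobenius--Schur phases. Each elementary Fourier move and each isotopy past a crossing produces ribbon twist factors $\theta_{X}$ and braiding phases, and the naive identity for the bare symbol $T$ would carry an anomalous overall phase. The reason the statement is clean is that one works with $|T|^{2}=T\,\overline{T}$: every phase generated on the $T$ side appears with its complex conjugate on the $\overline{T}$ side, so all ribbon anomalies cancel in modulus. Making this cancellation precise---checking that the global-dimension normalizations in $S$ match the number of summed loops and that no residual phase survives---is the step I expect to demand the most care, and it is where the modularity of $\C$ (unitarity and invertibility of $S$, together with the Verlinde relation) is genuinely used.
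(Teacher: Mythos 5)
The paper does not actually prove this theorem; it quotes it from \S6 of Liu's \emph{Quon Language: Surface algebras and Fourier duality}, so your proposal can only be measured against the strategy of that reference. Your high-level reading is the right one---the identity says $|T|^{2}$ is a fixed point of $S^{\otimes 6}$, the $S$-matrix is the incarnation of $\FS$, and working with the squared symbol is what disposes of ribbon and Frobenius--Schur phases---and that much is faithful to the cited proof.

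The gap is in your central step, the ``elementary Fourier move.'' Replacing an $X_{k}$-strand by $\sum_{Y_{k}}S_{X_{k}}^{Y_{k}}(\cdot)$ with $Y_{k}$ a small meridian encircling it is not a valid local identity for a graph sitting in $S^{3}$. If the meridian is evaluated in place, the only available identity is the killing-ring property: a $\sum_{Y}d_{Y}$-weighted loop around a strand projects onto the vacuum label, and more generally $\sum_{Y}S_{X}^{Y}\bigl(\text{$Y$-loop around a $Z$-strand}\bigr)$ collapses via $S^{2}=C$ to a delta function, not to a relabelling. If instead the meridians are kept, then after decorating all six edges you have the \emph{original} tetrahedron with labels $\vec X$ together with six mutually unlinked circles $Y_{1},\dots,Y_{6}$, each linking only its own edge; no isotopy of $S^{3}$ turns that configuration into a tetrahedron labelled by $\vec Y$, and the combinatorial self-duality of the tetrahedron does not supply one. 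What is actually needed---and what Liu's \S6 provides---is to pass from the closed graph to the complement of a tubular neighborhood of the tetrahedral graph (a genus-$3$ handlebody, doubled), so that each edge contributes a torus boundary component on which $S$ genuinely exchanges meridian and longitude; the removal of the $X$-edges is a surgery step, not an isotopy, and modularity is used exactly there. Your phase-cancellation remark for $|T|^{2}=T\overline{T}$ is the right instinct but, as you concede, unverified; in the handlebody formulation it is subsumed by the statement that the doubled manifold carries a phase-free Fourier duality. So the proposal assembles the correct ingredients, but the argument as written does not close.
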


Usually the $S$ matrix is known, but the 6j-symbols are hard to solve. 
Even for quantum $SU(N)$, the closed form formula of $6j$ symbols remains unknown.
From the point of view of Fourier analysis, we suggest another scheme to compute the 6j symbols.
It is possible to compute parts of the 6j-symbols and then to solve the modulus of 6j-symbols from partially known data in Equation \eqref{6jRelation}, similar to  signal recovery \cite{CRT04,Donoho06}. One may further solve for the phases using the fact that the 6j-symbols form a biunitary.

\section{A de Finetti Theorem for Braids}
The famous de Finetti theorem in classical probability theory 
clarifies the relationship between permutation symmetry and the independence of a sequence of random variables~\cite{dF31}. Consequently an infinite sequence of symmetric random variables can be written as a convex combination (or integration) of independent identically
distributed (i.i.d.) random variables.

Motivated by pictures, we have established  a de Finetti theorem for states on  para{\-}fermion algebras of order $d$. In particular, we use the fact that a pair of parafermions of order $d$ generate the 
$d\times d$ matrix algebra $\mathbb{M}_d(\mathbb{C})$, that we denote by $PF_2$.  Thus it is natural to consider pairs of parafermions as a unit, and to study double braids that exchange these pairs.  The adjoint action of a double braid $b_{j}$ acting on strands $2j-1, \ldots, 2j+1$ implements  the second Reidermeister move in the form, 
\be
\raisebox{-1.5cm}{
\scalebox{.5}{\begin{tikzpicture}
\begin{scope}[shift={(4,-4)},yscale=1,xscale=1]
\node at (0, -.2) {$_{_{_{_{ 2j-1}}}}$};
\node at (1, -.2) {$_{_{_{_{ 2j}}}}$};
\node at (2, -.2) {$_{_{_{_{ 2j+1}}}}$};
\node at (3, -.2) {$_{_{_{_{ 2j+2}}}}$};
\draw (2,0)--(0,2);
\draw (3,0)--(1,2);
\drawWL {}{0,0}{2,2};
\drawWL {}{1,0}{3,2};
\end{scope}
\node at (3.8, -1.9){$_{m}$};
\node at (4.8, -1.1){$_{n}$};
\draw (4,-1)--(4, -2);
\draw (5,-1)--(5, -2);
\draw (6,-1)--(6, -2);
\draw (7,-1)--(7, -2);
\begin{scope}[shift={(4,1)},yscale=-1,xscale=1]
\node at (0, -.2) {$_{_{_{_{ 2j-1}}}}$};
\node at (1, -.2) {$_{_{_{_{ 2j}}}}$};
\node at (2, -.2) {$_{_{_{_{ 2j+1}}}}$};
\node at (3, -.2) {$_{_{_{_{ 2j+2}}}}$};
\draw (2,0)--(0,2);
\draw (3,0)--(1,2);
\drawWL {}{0,0}{2,2};
\drawWL {}{1,0}{3,2};
\end{scope}
\end{tikzpicture}}}
\ =\ 
\raisebox{-1.25cm}{\scalebox{.5}{\begin{tikzpicture}
\begin{scope}
\draw (4,3)--(4, -2);
\draw (5,3)--(5, -2);
\draw (6,3)--(6, -2);
\draw (7,3)--(7, -2);
\node at (5.8, 0.1){$_{m}$};
\node at (6.8, .9){$_{n}$};
\end{scope}
\end{tikzpicture}}}\  \  .
\ee
Similarly, the adjoint action of $b_{j}b_{j-1}$ moves the charges on four-strings:
\[
\scriptstyle{Ad(b_{j}b_{j-1})}\left(
\raisebox{-2.35cm}{\scalebox{.5}{\begin{tikzpicture}
\begin{scope}
\node at (2, -5.2) {$_{_{_{_{ 2j-2}}}}$};
\node at (3, -5.2) {$_{_{_{_{ 2j-3}}}}$};
\node at (4, -5.2) {$_{_{_{_{ 2j-1}}}}$};
\node at (5, -5.2) {$_{_{_{_{ 2j}}}}$};
\node at (6, -5.2) {$_{_{_{_{ 2j+1}}}}$};
\node at (7, -5.2) {$_{_{_{_{ 2j+2}}}}$};
\draw (2,4)--(2, -5);
\draw (3,4)--(3, -5);
\draw (4,4)--(4, -5);
\draw (5,4)--(5, -5);
\draw (6,4)--(6, -5);
\draw (7,4)--(7, -5);
\node at (1.8, -.6){$_{k}$};
\node at (2.8, -.6){$_{l}$};
\node at (3.8, -.6){$_{m}$};
\node at (4.8, -.6){$_{n}$};
\end{scope}
\end{tikzpicture}}}   \right)\ 
=\
\raisebox{-2.5cm}{
\scalebox{.5}{\begin{tikzpicture}
\begin{scope}[shift={(4,-4)},yscale=1,xscale=1]
\node at (0, -2.2) {$_{_{_{_{ 2j-1}}}}$};
\node at (1, -2.2) {$_{_{_{_{ 2j}}}}$};
\node at (2, -2.2) {$_{_{_{_{ 2j+1}}}}$};
\node at (3, -2.2) {$_{_{_{_{ 2j+2}}}}$};
\node at (-1, -2.2) {$_{_{_{_{ 2j-2}}}}$};
\node at (-2, -2.2) {$_{_{_{_{ 2j-3}}}}$};
\draw (-1, 0)--(1,-2);
\draw (-2, 0)--(0,-2);
\draw (2,0)--(0,2);
\draw (3,0)--(1,2);
\draw (2,0)--(2,-2);
\draw (3,0)--(3,-2);
\drawWL {}{0,0}{-2,-2};
\drawWL {}{1,0}{-1,-2};

\drawWL {}{0,0}{2,2};
\drawWL {}{1,0}{3,2};
\end{scope}
\node at (3.8, -1.4){$_{m}$};
\node at (4.8, -1.4){$_{n}$};
\node at (2.8, -1.4){$_{l}$};
\node at (1.8, -1.4){$_{k}$};

\draw (4,-1)--(4, -2);
\draw (5,-1)--(5, -2);
\draw (6,-1)--(6, -2);
\draw (7,-1)--(7, -2);
\draw (2,1)--(2, -4);
\draw (3,1)--(3, -4);

\begin{scope}[shift={(4,1)},yscale=-1,xscale=1]
\node at (0, -2.2) {$_{_{_{_{ 2j-1}}}}$};
\node at (1, -2.2) {$_{_{_{_{ 2j}}}}$};
\node at (2, -2.2) {$_{_{_{_{ 2j+1}}}}$};
\node at (3, -2.2) {$_{_{_{_{ 2j+2}}}}$};
\node at (-1, -2.2) {$_{_{_{_{ 2j-2}}}}$};
\node at (-2, -2.2) {$_{_{_{_{ 2j-3}}}}$};
\draw (-1, 0)--(1,-2);
\draw (-2, 0)--(0,-2);
\draw (2,0)--(0,2);
\draw (3,0)--(1,2);
\draw (2,0)--(2,-2);
\draw (3,0)--(3,-2);
\drawWL {}{0,0}{-2,-2};
\drawWL {}{1,0}{-1,-2};

\drawWL {}{0,0}{2,2};
\drawWL {}{1,0}{3,2};
\end{scope}
\end{tikzpicture}}}
\ =\ 
\raisebox{-2.45cm}{\scalebox{.5}{\begin{tikzpicture}
\begin{scope}
\node at (2, -5.2) {$_{_{_{_{ 2j-2}}}}$};
\node at (3, -5.2) {$_{_{_{_{ 2j-3}}}}$};
\node at (4, -5.2) {$_{_{_{_{ 2j-1}}}}$};
\node at (5, -5.2) {$_{_{_{_{ 2j}}}}$};
\node at (6, -5.2) {$_{_{_{_{ 2j+1}}}}$};
\node at (7, -5.2) {$_{_{_{_{ 2j+2}}}}$};
\draw (2,4)--(2, -5);
\draw (3,4)--(3, -5);
\draw (4,4)--(4, -5);
\draw (5,4)--(5, -5);
\draw (6,4)--(6, -5);
\draw (7,4)--(7, -5);
\node at (3.8, -.4){$_{k}$};
\node at (4.8, -.4){$_{l}$};
\node at (5.8, -.4){$_{m}$};
\node at (6.8, -.4){$_{n}$};
\end{scope}
\end{tikzpicture}}}\  \  .
\]

The infinite parafermion algebra $PF_{\infty}$ is a $\mathbb{Z}_d$-graded, tensor product of algebras $PF_{2}$ of parafermion pairs. 
Here we consider the braid group $\mathbb{B}_{\infty}$, acting on pairs of parafermions .  Let $S_{\mathbb{B}_{\infty}}$ denote the states on $PF_{\infty}$ that are invariant under the action of $\mathbb{B}_{\infty}$. Here are two results: 
\begin{theorem}[\textbf{First de Finitti  for braided parafermions}~\cite{BJLW}]\label{Thm:main1}
Let $\varphi \in S_{\mathbb{B}_{\infty}}$ be a braid-invariant state on  $PF_{\infty}$. 
Then the following are equivalent:
\begin{enumerate}
\item The state $\varphi$ is extremal in the set of states $S_{\mathbb{B}_{\infty}}$ on $PF_{\infty}$.
\item The state $\varphi=\rho^{\otimes\infty}$ is the infinite tensor product of a state $\rho$ on $PF_2$. 
\end{enumerate}
\end{theorem}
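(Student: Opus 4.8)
The plan is to treat this as a noncommutative de Finetti theorem and to run it through the ergodic decomposition of the braid-invariant states, rather than through a direct computation. First I would record that $S_{\mathbb{B}_\infty}$ is a weak-$*$ compact convex subset of the state space of $PF_\infty$, so that, under the standard asymptotic-abelianness input for this action, its extreme points are exactly the $\mathbb{B}_\infty$-\emph{ergodic} states: those whose GNS representation admits no braid-invariant vector other than multiples of the cyclic vector $\Omega_\varphi$. Setting $\rho:=\varphi|_{PF_2}$ for the first parafermion pair, braid-invariance together with the charge-moving relations identifies every single-pair marginal with $\rho$ up to the $\mathbb{Z}_d$-grading phase, so the product state in (2) is forced to be $\rho^{\otimes\infty}$. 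The problem then reduces to showing that, within $S_{\mathbb{B}_\infty}$, the notions \emph{extremal} and \emph{product} coincide.

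For the implication $(2)\Rightarrow(1)$ I would establish a clustering estimate for $\rho^{\otimes\infty}$. Given local observables $a,b$ supported on finitely many pairs, one chooses a braid word $g$ whose adjoint action $\mathrm{Ad}(g)$ transports the support of $b$ arbitrarily far to the right; since $\rho^{\otimes\infty}$ factorizes over disjoint regions, one obtains the asymptotic factorization
\[
\varphi\bigl(a\,\mathrm{Ad}(g)\,b\bigr)\ \longrightarrow\ \varphi(a)\,\varphi(b)\;.
\]
Averaging over the braid action and applying the mean ergodic theorem in the GNS space then shows that the only braid-invariant vectors are multiples of $\Omega_\varphi$, so $\varphi$ is ergodic and hence extremal.

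For the converse $(1)\Rightarrow(2)$ I would carry out the operator-algebraic de Finetti argument of St{\o}rmer and Hudson--Moody in braided form. Extremality (ergodicity) is first converted into triviality of the tail algebra $\mathcal{T}=\bigcap_n \mathcal{A}_{[n,\infty)}$ by a reverse-martingale and mean-ergodic argument, where braid words supply the shifts that move supports into the tail. Tail-triviality, combined with invariance (which via braids renders every pair interchangeable with the first), then yields $\varphi(a_1\cdots a_k)=\prod_i\varphi(a_i)$ for observables $a_i$ on distinct pairs; by density this is precisely $\varphi=\rho^{\otimes\infty}$.

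The main obstacle is that $\mathbb{B}_\infty$ does not act by plain permutations: braided exchangeability is strictly stronger, and the para-isotopy relations \eqref{PI-Relation} and \eqref{SF-Relation} attach phases $q^{k\ell}$ and $\zeta^{k^2}$ together with a $\mathbb{Z}_d$-grading twist each time a pair is transported. The delicate point is to show that these phases neither obstruct the clustering estimate nor corrupt the factorization into a braided or crossed product. I expect this to be controlled by the grading operator $\gamma=\zeta^{-1}XYZ$: braid-invariance forces $\rho$ to be grading-invariant, so the accumulated phases cancel in the limit and the decomposition closes up in the stated untwisted form $\rho^{\otimes\infty}$.
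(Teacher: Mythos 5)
First, a caveat on the comparison: this survey states the theorem without proof, citing \cite{BJLW}, so your proposal can only be measured against the argument given there. Your overall architecture --- realize $S_{\mathbb{B}_\infty}$ as a weak-$*$ compact convex set whose extreme points are the ergodic states, prove clustering for product states to get $(2)\Rightarrow(1)$, and run a St{\o}rmer/Hudson--Moody tail-algebra argument for $(1)\Rightarrow(2)$ --- is the standard operator-algebraic de Finetti framework and is the right general shape for this result.

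There are, however, two concrete gaps. The first is that you never invoke the one structural fact that makes the braided setting tractable, namely the two displayed identities preceding the theorem: the adjoint action of a double braid $b_j$ satisfies the second Reidemeister move, and $\mathrm{Ad}(b_jb_{j-1})$ transports the charges $(k,l,m,n)$ from one block of four strands to the next \emph{exactly}, with no residual phase. This is what shows that the double braids act on the family of pair subalgebras $PF_2$ as honest permutations, so braided exchangeability reduces to ordinary exchangeability of the factors; the phases $q^{k\ell}$ and $\zeta^{k^2}$ that worry you at the end simply do not survive into the adjoint action, and no limiting cancellation is needed. Relatedly, $PF_\infty$ is a $\mathbb{Z}_d$-\emph{graded} tensor product: elements supported on distinct pairs $q$-commute rather than commute, so the phrases ``$\rho^{\otimes\infty}$ factorizes over disjoint regions'' and ``tail algebra $\bigcap_n\mathcal{A}_{[n,\infty)}$'' are not yet meaningful as written; one must first untwist the grading (a Jordan--Wigner/Klein-type transformation, available because each $PF_2\cong\mathbb{M}_d(\mathbb{C})$ contains its local grading unitary) before the St{\o}rmer machinery applies. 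The second gap is your closing claim that braid-invariance forces $\rho$ to be grading-invariant, i.e.\ neutral. That cannot be right in general: it would collapse Theorem~\ref{Thm:main1} into Theorem~\ref{Thm:main2} and render vacuous the square-free dichotomy the authors emphasize immediately afterwards. Neutrality of $\rho$ is a conclusion available only when $d$ is square-free, so the mechanism you propose for taming the phases is not the one that closes the argument.
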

As a consequence, any  $\mathbb{B}_{\infty}$-invariant state on $PF_{\infty}$ is in the closure of the convex hull of the product states.
Let $\overline{PF}^{\varphi}_{\infty}$ denote the von Neumann algebra
generated by $PF_{\infty}$ in the Gel{\-}fand-Naimark-Segal (GNS) construction  with respect to the  state $\varphi\in S_{\mathbb{B}_{\infty}}$.  Also let   $(\overline{PF}^{\varphi}_{\infty})^{\mathbb{B}_{\infty}}$ be the fixed point algebra under the action of the braid group $\mathbb{B}_{\infty}$. 
The neutral subalgebra of  $(\overline{PF}^{\varphi}_{\infty})^{\mathbb{B}_{\infty}}$  is the subalgebra generated by monomials in parafermions of degree zero  mod $d$.  

It is interesting that a distinction arises in this characterization,  according to whether or not the order of the parafermion algebra is square free.    (This means that  $d=\prod_{i} p_i$\,, where the primes $p_i$ are distinct.)
Let us now suppose that the degree $d$ of the parafermion algebra is square free.  In this case one finds that extremal, braid-invariant states are neutral and that they  give rise to a factor.  One can refine Theorem \ref{Thm:main1}  as follows:

\begin{theorem}[\textbf{Second de Finitti  for square-free parafermion degree~\cite{BJLW}}]\label{Thm:main2}
Let $\varphi\in S_{\mathbb{B}_{\infty}}$ be a braid-invariant state on a parafermion algebra $PF_{\infty}$ of square-free degree $d$.  
The following are equivalent:
\begin{enumerate}
\item  The state $\varphi$ is extremal in $S_{\mathbb{B}_{\infty}}$.
\item  The state $\varphi= \rho^{\otimes\infty}$, where $\rho$ is  a neutral state on $PF_{2}$.
\item The neutral subalgebra of $(\overline{PF}^{\varphi}_{\infty})^{\mathbb{B}_{\infty}}=\mathbb{C}$.
\item The algebra $(\overline{PF}^{\varphi}_{\infty})^{\mathbb{B}_{\infty}}=\mathbb{C} $.
\item  The von Neumann algebra $\overline{PF}^{\varphi}_{\infty}$ is a factor.
\end{enumerate}
\end{theorem}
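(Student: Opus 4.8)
The plan is to run a cycle of implications
$(1)\Rightarrow(2)\Rightarrow(5)\Rightarrow(4)\Rightarrow(3)\Rightarrow(1)$,
writing $M=\overline{PF}^{\varphi}_{\infty}$ for the GNS von Neumann algebra and exploiting the $\mathbb{Z}_d$-grading throughout. The backbone is the general correspondence between extremality of an invariant state and ergodicity of the associated action: because the double braids move any local operator off to infinity, the $\mathbb{B}_\infty$-action is asymptotically abelian, so $S_{\mathbb{B}_\infty}$ is a simplex and $\varphi$ is extremal if and only if the fixed-point algebra $M^{\mathbb{B}_\infty}$ is trivial. I would establish this once, as it immediately yields $(1)\Leftrightarrow(4)$ and, combined with Theorem~\ref{Thm:main1}, the product structure $\varphi=\rho^{\otimes\infty}$ whenever $\varphi$ is extremal.

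For $(1)\Rightarrow(2)$ I would start from Theorem~\ref{Thm:main1}, which already gives $\varphi=\rho^{\otimes\infty}$, and then show that braid-invariance of a product state forces $\rho$ to be neutral when $d$ is square-free. Concretely, I would decompose $\rho$ into its $\mathbb{Z}_d$-homogeneous components and compute the action of a single double braid on $\rho\otimes\rho$; by the para-isotopy and string-Fourier relations \eqref{PI-Relation}--\eqref{SF-Relation} this action multiplies the degree-$k$ part by the quadratic phase $\zeta^{k^{2}}$ together with cross phases $q^{k\ell}$. Invariance then becomes a system of equations on these Gaussian phases, and the non-degeneracy of the quadratic form $k\mapsto \zeta^{k^{2}}$ on $\mathbb{Z}_d$---equivalently, the non-vanishing of the associated Gauss sums, which holds precisely because $d=\prod_i p_i$ is square-free---forces every component of nonzero degree to vanish, i.e. $\rho$ is neutral. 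The converse $(2)\Rightarrow(1)$ is the easy direction: a neutral $\rho$ makes the twist phases disappear, so $\rho^{\otimes\infty}$ is genuinely braid-invariant, and Theorem~\ref{Thm:main1} guarantees that any such product state is extremal.

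The two algebraic conditions are then handled by the grading. For $(2)\Rightarrow(5)$ I would invoke the standard fact that the GNS algebra of an infinite product state is a factor; neutrality of $\rho$ is exactly what is needed to trivialize the parafermionic twist so that $M$ is an ordinary infinite tensor product (ITPFI) factor rather than a twisted one. For $(5)\Rightarrow(4)$ I would use asymptotic abelianness once more: any $a\in M^{\mathbb{B}_\infty}$ satisfies $[a,c]=[\alpha_g(a),c]\to 0$ for local $c$ as $g\to\infty$, hence $M^{\mathbb{B}_\infty}\subseteq Z(M)$, and a factor has $Z(M)=\mathbb{C}$. The inclusion $(4)\Rightarrow(3)$ is immediate, the neutral subalgebra being a unital subalgebra of $M^{\mathbb{B}_\infty}$.

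The crux, and the step I expect to be hardest, is $(3)\Rightarrow(1)$, where square-freeness must be used a second time. Granting the general equivalence $(1)\Leftrightarrow(4)$, it suffices to prove $(3)\Rightarrow(4)$: if the neutral part of $M^{\mathbb{B}_\infty}$ is trivial then all of $M^{\mathbb{B}_\infty}$ is. Since the grading commutes with the braid action, every homogeneous component $a_k$ of an invariant element is again invariant, and $a_k a_k^{*}$ is a neutral invariant element, hence a scalar by $(3)$; thus a nontrivial $M^{\mathbb{B}_\infty}$ would have to contain a braid-invariant \emph{unitary} of some nonzero degree $k$. The main obstacle is to rule this out: one must show that such a unitary is incompatible with the product structure and the quadratic phases $\zeta^{k^{2}}$, again via the non-degeneracy of the Gauss sum on $\mathbb{Z}_d$ that square-freeness provides. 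I expect this number-theoretic non-degeneracy, together with its clean failure when some prime divides $d$ to the second power, to be the genuine content of the theorem and the place where the argument truly departs from Theorem~\ref{Thm:main1}.
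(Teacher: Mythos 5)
First, a point of comparison: this survey states Theorem~\ref{Thm:main2} without proof, citing~\cite{BJLW}, so there is no in-paper argument to measure you against; I can only assess your proposal on its own terms. Your overall architecture is sensible, and you have correctly isolated the arithmetic heart of the matter: for square-free $d$ the relation $d\mid k^{2}$ forces $d\mid k$, so the set of degrees on which an invariant product state can be supported collapses to $\{0\}$. (Phrase this as the triviality of $\{k\in\mathbb{Z}_d: q^{k^2}=1\}$ rather than ``non-vanishing of Gauss sums''---quadratic Gauss sums mod $d$ never vanish; what degenerates for non-square-free $d$ is the kernel of the quadratic form, e.g.\ $k=p$ when $d=p^{2}$.) The steps $(1)\Rightarrow(2)$, $(2)\Rightarrow(5)$ and $(4)\Rightarrow(3)$ are in order.

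The genuine gap is your backbone claim that the $\mathbb{B}_\infty$-action on $PF_\infty$ is asymptotically abelian, from which you derive both the ``general'' equivalence $(1)\Leftrightarrow(4)$ and the step $(5)\Rightarrow(4)$. This is false as stated: by the para-isotopy relation \eqref{PI-Relation}, homogeneous elements of degrees $k$ and $\ell$ with disjoint, arbitrarily distant supports satisfy $xy=q^{\pm k\ell}yx$, so $[\alpha_g(a),c]=(1-q^{\pm k\ell})\alpha_g(a)c$ does not tend to zero unless one of the degrees is neutral (or $q^{k\ell}=1$, which is exactly where square-freeness enters). Consequently the extremality--ergodicity correspondence you can get for free is $(1)\Leftrightarrow(3)$, with the \emph{neutral} part of the fixed-point algebra---this is precisely why $(3)$ and $(4)$ appear as separate conditions in the statement, and for non-square-free $d$ they are genuinely inequivalent (consider $d=4$ and a product of states supported on degrees $\{0,2\}$). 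With $(1)\Leftrightarrow(3)$ in hand, your ``crux'' $(3)\Rightarrow(1)$ dissolves, but the real work relocates to $(3)\Rightarrow(4)$: ruling out a braid-invariant unitary of nonzero degree $k$. Your homogeneous-decomposition reduction to such a unitary is correct and is the right move, but you then defer the exclusion itself (``I expect this\ldots to be the genuine content''), so the proof is not actually closed at its hardest point; one must combine the unitary's fixed degree with the already-established neutral product structure of $\varphi$ and the phase $q^{k\ell}$ obstruction, invoking square-freeness a second time. Until both the graded correction to asymptotic abelianness and this exclusion are supplied, the cycle does not close.
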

\noindent
We have also given a  characterization in case $d$ is not square free, see~\cite{BJLW}. 

\section{Some Outlooks}
We believe that we have only scratched the surface of an enormous new area of pictorial mathematics. We are excited by the fact that these advances have tentacles into many different subfields of mathematics.  They not only  have already made contact with algebra and topology.  But we also have found new and deep directions for  analysis appearing in our early results.  

In our paper~\cite{JL-Pic} we gave a long list of open questions.  Here are some complementary remarks on directions that we and our collaborators are investigating: 
\begin{enumerate}[I.]

\item {\bf Quantum Fourier Analysis.}  There are many new inequalities to be found, including topological Brascamp-Lieb inequalitites. There are new uncertainty principles still to be discovered.  The analysis of the period-$2n$ string Fourier transform on $n$-box transformations presumably has a rich and interesting  structure. 

\item{\bf Quantum Information.}  While we have already investigated the topological design of protocols, little work has been done on algorithms.  This will be a major focus in the future.  The quon language seems beautifully suited to describe stabilizer quantum error correction codes.  How to identify optimal codes is an open problem.  Also the use of quon language to understand questions of {\it topological complexity} appears very ripe.  
\end{enumerate}

\end{document}